\setlist[enumerate]{label=(\arabic*)}
\begin{document}

\pagestyle{headings}

\title{Computing the concurrency threshold of sound free-choice workflow nets}

\author{\
Philipp J. Meyer\inst{1} \and
Javier Esparza\inst{1} \and
Hagen V\"olzer\inst{2}}

\institute{Technical University of Munich, Germany\\
\email{\{meyerphi,esparza\}@in.tum.de}
\and
IBM Research, Zurich, Switzerland\\
\email{hvo@zurich.ibm.com}
}

\maketitle              

\begin{abstract}
Workflow graphs extend classical flow charts with concurrent fork and join nodes.
They constitute the core of business processing languages such as BPMN 
or UML Activity Diagrams. The activities of a workflow graph are executed by
humans or machines, generically called resources. If concurrent activities cannot
be executed in parallel by lack of resources, the time needed to execute the workflow increases.
We study the problem of computing the minimal number of resources necessary to fully
exploit the concurrency of a given workflow, and execute it as fast as possible (i.e., as fast as with unlimited resources).

We model this problem using free-choice Petri nets, which are known to be equivalent to workflow graphs.
We analyze the computational complexity of two versions of the problem: computing the resource and concurrency thresholds.
We use the results to design an algorithm to approximate the concurrency threshold, and evaluate it on a
benchmark suite of 642 industrial examples. We show that it performs very well in practice: It always provides the 
exact value, and never takes more than 30 milliseconds for any workflow, even for those with a huge number of reachable markings.
\end{abstract}

\section{Introduction}\label{sec:introduction}
A \emph{workflow graph} is a classical control-flow graph (or flow chart) extended with concurrent fork and join. Workflow graphs represent the core of workflow languages such as BPMN (Business Process Model and Notation), EPC (Event-driven Process Chain), or UML Activity Diagrams. 

In many applications, the activities of an execution workflow graph have to be carried out by a fixed number
of \emph{resources} (for example, a fixed number of computer cores). Increasing the number of cores can reduce the minimal runtime of the workflow. For example, consider a simple deterministic workflow (a workflow without choice or merge nodes), which forks into $k$ parallel activities, all of duration 1, and terminates after a join. With an optimal assignment of resources to activities, the workflow takes time $k$ when executed with one resource, time $\lceil k/2 \rceil$ with two resources, and time $1$ with $k$ resources; additional resources bring no further reduction. We call $k$ the \emph{resource threshold}. In a deterministic workflow that forks into two parallel chains of $k$ sequential activities each, one resource leads to runtime $2k$, and two resources to runtime $k$. More resources do not improve the runtime, and so the resource threshold is 2. Clearly, the resource threshold of a deterministic workflow with $k$ activities is a number between $1$ and $k$.
Determining this number can be seen as a scheduling problem.
However, most scheduling problems assume a fixed number of resources and study how to
optimize the makespan~\cite{HallS96,Pinedo16}, while we study how to minimize the number of resources.
Other works on resource/machine minimization~\cite{ChuzhoyGKN04,ChuzhoyC09}
consider interval constraints instead of the partial-order constraints given by a workflow graph.

\begin{figure}
\begin{center}
\subfloat[Sound free-choice workflow net $N$]{\
    \begin{tikzpicture}[scale=0.7,every node/.style={scale=0.8}]
        \node [place,label={[xshift=0mm,yshift=-0.5mm] above:$i$}] (i) at (0,2) {\ttime{0}};
        \node [place,label={[xshift=0mm,yshift=-0.5mm] above:$p_1$}] (p1) at (2.25,3) {\ttime{1}};
        \node [place,label={[xshift=0mm,yshift=0.5mm] below:$p_2$}] (p2) at (2.25,1) {\ttime{1}};
        \node [place,label={[xshift=0mm,yshift=-0.5mm] above:$p_3$}] (p3) at (4.25,3) {\ttime{1}};
        \node [place,label={[xshift=0mm,yshift=0.5mm] below:$p_4$}] (p4) at (4.25,1) {\ttime{2}};
        \node [place,label={[xshift=0mm,yshift=-0.5mm] above:$p_5$}] (p5) at (3.25,2) {\ttime{2}};
        \node [place,label={[xshift=0.7mm,yshift=0mm] left:$p_6$}] (p6) at (5.5,1) {\ttime{1}};
        \node [place,label={[xshift=0mm,yshift=0.5mm] below:$p_7$}] (p7) at (4.25,0) {\ttime{1}};
        \node [place,label={[xshift=0mm,yshift=-0.5mm] above:$p_8$}] (p8) at (6.5,2) {\ttime{2}};
        \node [place,label={[xshift=0mm,yshift=0.5mm] below:$p_9$}] (p9) at (6.5,0) {\ttime{2}};
        \node [place,label={[xshift=0mm,yshift=-0.5mm] above:$o$}] (o) at (7.5,1) {\ttime{0}};

        \node [transition,label={[xshift=0mm,yshift=-0.5mm] above:$t_1$}] (t1) at (1,2) {}
          edge [pre]  (i)
          edge [post] (p1)
          edge [post] (p2);
        \draw [post] (t1) |- (p7);
        \node [transition,label={[xshift=0mm,yshift=-0.5mm] above:$t_2$}] (t2) at (3.25,3) {}
          edge [pre]  (p1)
          edge [post] (p3);
        \node [transition,label={[xshift=0mm,yshift=0.5mm] below:$t_3$}] (t3) at (3.25,1) {}
          edge [pre]  (p2)
          edge [post] (p4);
        \node [transition,label={[xshift=-0.7mm,yshift=0mm] right:$t_4$}] (t4) at (4.25,2) {}
          edge [pre]  (p3)
          edge [pre]  (p4)
          edge [post] (p5)
          edge [post] (p2);
        \node [transition,label={[xshift=0.7mm,yshift=0mm] left:$t_5$}] (t5) at (2.25,2) {}
          edge [pre]  (p5)
          edge [post] (p1);
        \node [transition,label={[xshift=0mm,yshift=-0.5mm] above:$t_6$}] (t6) at (5.5,2) {}
          edge [pre]  (p3)
          edge [pre]  (p4)
          edge [post] (p6)
          edge [post] (p8);
        \node [transition,label={[xshift=0mm,yshift=0.5mm] below:$t_7$}] (t7) at (5.5,0) {}
          edge [pre]  (p6)
          edge [pre]  (p7)
          edge [post] (p9);
        \node [transition,label={[xshift=0.7mm,yshift=0mm] left:$t_8$}] (t8) at (6.5,1) {}
          edge [pre]  (p8)
          edge [pre]  (p9)
          edge [post] (o);
    \end{tikzpicture}\label{fig:example-net}
}
\subfloat[A run of $N$]{\
    \begin{tikzpicture}[scale=0.7,every node/.style={scale=0.8}]
        \node [place,label={[xshift=0mm,yshift=-0.5mm] above:$i$}] (i) at (0,2) {\ttime{0}};
        \node [place,label={[xshift=0mm,yshift=-0.5mm] above:$p_1$}] (p1) at (2.25,3) {\ttime{1}};
        \node [place,label={[xshift=0mm,yshift=0.5mm] below:$p_2$}] (p2) at (2.25,1) {\ttime{1}};
        \node [place,label={[xshift=0mm,yshift=-0.5mm] above:$p_3$}] (p3) at (4.25,3) {\ttime{1}};
        \node [place,label={[xshift=0mm,yshift=0.5mm] below :$p_4$}] (p4) at (4.25,1) {\ttime{2}};
        \node [place,label={[xshift=0.7mm,yshift=0mm] left:$p_6$}] (p6) at (5.5,1) {\ttime{1}};
        \node [place,label={[xshift=0mm,yshift=0.5mm] below:$p_7$}] (p7) at (4.25,0) {\ttime{1}};
        \node [place,label={[xshift=0mm,yshift=-0.5mm] above:$p_8$}] (p8) at (6.5,2) {\ttime{2}};
        \node [place,label={[xshift=0mm,yshift=0.5mm] below:$p_9$}] (p9) at (6.5,0) {\ttime{2}};
        \node [place,label={[xshift=0mm,yshift=-0.5mm] above:$o$}] (o) at (7.5,1) {\ttime{0}};

        \node [transition,label={[xshift=0mm,yshift=-0.5mm] above:$t_1$}] (t1) at (1,2) {}
          edge [pre]  (i)
          edge [post] (p1)
          edge [post] (p2);
        \draw [post] (t1) |- (p7);
        \node [transition,label={[xshift=0mm,yshift=-0.5mm] above:$t_2$}] (t2) at (3.25,3) {}
          edge [pre]  (p1)
          edge [post] (p3);
        \node [transition,label={[xshift=0mm,yshift=0.5mm] below:$t_3$}] (t3) at (3.25,1) {}
          edge [pre]  (p2)
          edge [post] (p4);
        \node [transition,label={[xshift=0mm,yshift=-0.5mm] above:$t_6$}] (t6) at (5.5,2) {}
          edge [pre]  (p3)
          edge [pre]  (p4)
          edge [post] (p6)
          edge [post] (p8);
        \node [transition,label={[xshift=0mm,yshift=0.5mm] below:$t_7$}] (t7) at (5.5,0) {}
          edge [pre]  (p6)
          edge [pre]  (p7)
          edge [post] (p9);
        \node [transition,label={[xshift=0.7mm,yshift=0mm] left:$t_8$}] (t8) at (6.5,1) {}
          edge [pre]  (p8)
          edge [pre]  (p9)
          edge [post] (o);
    \end{tikzpicture}\label{fig:example-net-run}
}
\caption{A sound free-choice workflow net and one of its runs}\label{fig:intro-example}
\end{center}
\end{figure}
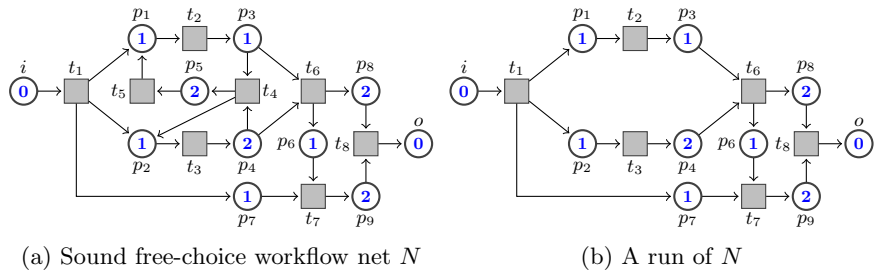

Following previous work, we do not directly work with workflow graphs, but with their equivalent representation as \emph{free-choice workflow Petri nets}, which has been shown to be essentially the same model~\cite{FavreFV15} and allows us to directly use a wealth of results of free-choice Petri nets~\cite{DeselEsparza95}. Fig.~\ref{fig:example-net} shows a free-choice workflow net. The actual workflow activities, also called \emph{tasks}, which need a resource to execute and which consume time are modeled as the places of the net: Each place $p$ of the net is assigned a time $\tau(p)$, depicted in blue. Intuitively, when a token arrives in $p$, it must execute a task that takes $\tau(p)$ time units before it can be used to fire a transition. A free choice exists between transitions $t_4$ and $t_6$, which is a representation of a choice node (if-then-else or loop condition) in the workflow. 

If no choice is present or all choices are resolved, we have a deterministic workflow such as the one in Fig.~\ref{fig:example-net-run}. In Petri net terminology, deterministic workflows correspond to the class of marked graphs. Deterministic workflows are common in practice: in the standard suite of 642 industrial workflows that we use for experiments, $63.7\%$ percent are deterministic. We show that already for this restricted class, deciding if the threshold exceeds a given bound is NP-hard.  Therefore, we investigate an over-approximation of the resource threshold, already introduced in~\cite{BVT16}: the \emph{concurrency threshold}. This is the maximal number of task places that can be simultaneously marked at a reachable marking. Clearly, if a workflow with concurrency threshold $k$ is executed with $k$ resources, then we can always start the task of a place immediately after a token arrives, and this schedule already achieves the fastest runtime achievable with unlimited resources. We show that the concurrency threshold can be computed in polynomial time for deterministic workflows.

For workflows with nondeterministic choice, corresponding to free-choice nets, we show that 
computing the concurrency threshold of free-choice workflow nets is NP-hard, solving a problem left open in~\cite{BVT16}. We even prove that the problem remains NP-hard for sound free-choice workflows. Soundness is the dominant behavioral correctness notion for workflows, which rules out basic control-flow errors such as deadlocks. NP-hardness in the sound case is remarkable, because many analysis problems that have high complexity in the unsound case can be solved in polynomial time in the sound case (see e.g.~\cite{DeselEsparza95,Aalst97,DBLP:conf/qest/EsparzaHS16}).

After our complexity analysis, we design an algorithm to compute bounds on the concurrency threshold using a combination of linear optimization and state-space exploration.
We evaluate it on a benchmark suite of 642 sound free-choice workflow nets from an industrial source (IBM)~\cite{FFJKLVW09}. The bounds can be computed in a total of 7 seconds (over all 642 nets). On the contrary, the computation of the exact value by state-space exploration techniques times out for the three largest nets, and takes 7 minutes for the rest. (Observe that partial-order reduction techniques cannot be used, because one may then miss the interleaving realizing the concurrency threshold.)

The paper is structured as follows. Section~\ref{sec:preliminaries} contains preliminaries.
Sections~\ref{sec:rt} and~\ref{sec:ct} study the resource and concurrency thresholds, respectively.
Section~\ref{sec:experiments} presents our algorithms for computing the concurrency bound, and experimental results.
Finally, Section~\ref{sec:conclusion} contains conclusions.

\section{Preliminaries}\label{sec:preliminaries}
\subsubsection{Petri nets.} A \emph{Petri net} $N$ is a tuple $(P, T, F)$ where
$P$ is a finite set of places, $T$ is a finite set of transitions
($P \cap T = \emptyset$), and $F \subseteq (P \times T) \cup (T \times P)$
is a set of arcs. The \emph{preset} of $x \in P \cup T$
is $\preset{x} \defeq \set{ y \mid (y, x) \in F }$ and its
\emph{postset} is $\postset{x} \defeq \set{ y \mid (x, y) \in F }$.
We extend the definition of presets and postsets to sets of places and 
transitions $X \subseteq P \cup T$ by $\preset{X} \defeq \bigcup_{x \in X} \preset{x}$
and $\postset{X} \defeq \bigcup_{x \in X} \postset{x}$.
A net is \emph{acyclic} if the relation $F^*$ is a partial order,
denoted by $\preceq$ and called the \emph{causal order}.
A node $x$ of an acyclic net is \emph{causally maximal} if no node y satisfies $x \prec y$.

A \emph{marking} of a Petri net is a function $M \colon P \to \N$, representing
the number of tokens in each place. For a set of places $S \subseteq P$, we define 
$M(S) \defeq \sum_{p \in S} M(p)$. Further, for a set of places $S \subseteq P$, we 
define by $M_S$ the marking with $M_S(p) = 1$ for $p \in S$ and
$M_S(p) = 0$ for $p \notin S$.

A transition $t$ is \emph{enabled} at a marking $M$ if
for all $p \in \preset{t}$, we have $M(p) \ge 1$.
If $t$ is enabled at $M$, it may \emph{occur}, leading to a marking $M'$ obtained by 
removing one token from each place of $\preset{t}$  and then adding one token to 
each place of $\postset{t}$.
We denote this by $M \trans{t} M'$.
Let $\sigma = t_1 t_2 \ldots t_n$ be a sequence of transitions.
For a marking $M_0$, $\sigma$ is an \emph{occurrence sequence} if
$M_0 \trans{t_1} M_1 \trans{t_2} \ldots \trans{t_n} M_n$
for some markings $M_1,\ldots,M_n$. We say that $M_n$ is reachable
from $M_0$ by $\sigma$ and denote this by $M_0 \trans{\sigma} M_n$.
The set of all markings reachable from $M$ in $N$ by some occurrence
sequence $\sigma$ is denoted by $\reach[N]{M}$. A \emph{system} is a pair $(N, M)$ of a Petri net $N$ and a marking $M$.
A system $(N, M)$ is \emph{live} if for every $M' \in \reach[N]{M}$ and every transition $t$ some marking $M'' \in \reach[N]{M'}$ enables $t$. The system  is \emph{1-safe} if $M'(p) \leq 1$ for every $M' \in \reach[N]{M}$ and every place $p \in P$.\\
\noindent {\bf Convention:} Throughout this paper we assume that systems are 1-safe, i.e., we identify
``system'' and ``1-safe system''.

\subsubsection{Net classes.} A net $N=(P,T,F)$ is 
a \emph{marked graph} if $|\preset{p}| \leq 1$ and $|\postset{p}| \leq 1$ for every place $p \in P$, and a \emph{free-choice net} if for any two places $p_1, p_2 \in P$ either $\postset{p_1} \cap \postset{p_2} = \emptyset$ or $\postset{p_1} = \postset{p_2}$.

\subsubsection{Non-sequential processes of Petri nets.}
An \emph{$(A,B)$-labeled Petri net} is a tuple $N=(P, T, F, \lambda, \mu)$, where 
$\lambda \colon P \rightarrow A$ and $\mu \colon T \rightarrow B$ are \emph{labeling functions} over alphabets $A, B$. 
The nonsequential processes of a 1-safe system $(N, M)$ are acyclic, $(P,T)$-labeled marked graphs.  Say that a set $P''$ of places of a $(P,T)$-labeled acyclic net \emph{enables} $t \in T$ if all the places of $P''$ are causally maximal, carry pairwise distinct labels, and $\lambda(P'')=\preset{t}$. 

\begin{definition}
Let $N=(P,T,F)$ be a Petri net and let $M$ be a marking of $N$.
The set $\NP{N, M}$ of \emph{nonsequential processes} of $(N, M)$ (\emph{processes} for short) is the set of $(P,T)$-labeled Petri nets 
defined inductively as follows:
\begin{itemize}
\item The $(P,T)$-labeled Petri net containing for each place $p \in P$ marked at $M$ one place $\widehat{p}$ labeled by $p$, no other places, and no transitions, belongs to $\NP{N, M}$.
\item If $\Pi=(P', T', F', \lambda, \mu) \in \NP{N, M}$ and $P'' \subseteq P'$ enables some transition $t$ of $N$, then the $(P,T)$-labeled net $\Pi_t =(P' \uplus \widehat{P}, T' \uplus \{\,\widehat{t}\,\}, F' \uplus \widehat{F}, \lambda \uplus \widehat{\lambda}, \mu \uplus \widehat{\mu})$, where
\begin{itemize}
\item $\widehat{P} = \{\,\widehat{p} \mid p \in \postset{t} \}$, with $\widehat{\lambda}(\,\widehat{p}\,)=p$, and 
$\widehat{\mu}(\,\widehat{t}\,)=t$; 
\item $\widehat{F} = \{ (\,p'', \widehat{t}\,) \mid p'' \in P'' \} \cup \{ (\,\widehat{t}, \widehat{p}\,) \mid \widehat{p} \in \widehat{P}\}$;
\end{itemize}
\noindent also belongs to $\NP{N, M}$. We say that $\Pi_t$ \emph{extends} $\Pi$.
\end{itemize}
We denote the minimal and maximal places of a process $\Pi$ w.r.t.~the causal order by $\min(\Pi)$ and $\max(\Pi)$, respectively. 
\end{definition}

As usual, we say that two processes are \emph{isomorphic} if they are the same up to renaming of the places and transitions 
(notice that we rename only the names of the places and transitions, not their labels).

Fig.~\ref{fig:example-process} shows two  processes of the workflow net in Fig.~\ref{fig:example-net}. (The figure does not show the names of places and transitions, only their labels.) The net containing the white and grey nodes only is already a process, and the grey places are causally maximal places that enable $t_6$. Therefore, according to the definition we can extend the process with the green nodes to produce another process. On the right we extend the same process in a different way, with the transition $t_4$.

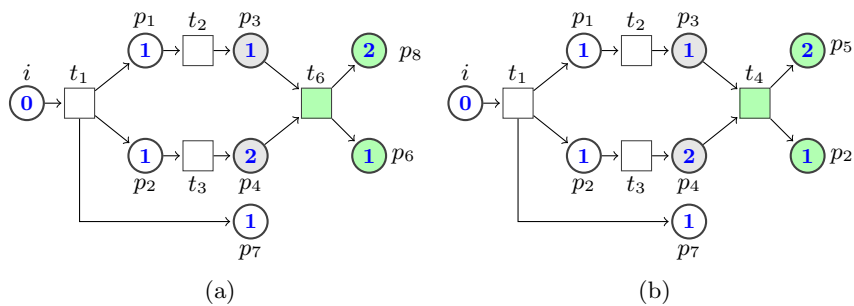
\begin{figure}
\begin{center}
\subfloat[]{\
    \begin{tikzpicture}[scale=0.7,every node/.style={scale=1}]
        \node [place,label={[xshift=0mm,yshift=-0.5mm] above:$i$}] (i) at (0,2) {\ttime{0}};
        \node [place,label={[xshift=0mm,yshift=-0.5mm] above:$p_1$}] (p1) at (2.25,3) {\ttime{1}};
        \node [place,label={[xshift=0mm,yshift=0.5mm] below:$p_2$}] (p2) at (2.25,1) {\ttime{1}};
        \node [place, fill=black!10, label={[xshift=0mm,yshift=-0.5mm] above:$p_3$}] (p3) at (4.25,3) {\ttime{1}};
        \node [place,fill=black!10, label={[xshift=0mm,yshift=0.5mm] below:$p_4$}] (p4) at (4.25,1) {\ttime{2}};
        \node [place,fill=green!30,label={[xshift=-0.5mm,yshift=0mm] right:$p_6$}] (p6) at (6.5,1) {\ttime{1}};
        \node [place,label={[xshift=0mm,yshift=0.5mm] below:$p_7$}] (p7) at (4.25,-0.25) {\ttime{1}};
        \node [place,fill=green!30,label={[xshift=0.5mm,yshift=-0.5mm] right:$p_8$}] (p8) at (6.5,3) {\ttime{2}};

        \node [transition,fill=white!25,label={[xshift=0mm,yshift=-0.5mm] above:$t_1$}] (t1) at (1,2) {}
          edge [pre]  (i)
          edge [post] (p1)
          edge [post] (p2);
        \draw [post] (t1) |- (p7);
        \node [transition,fill=white!25,label={[xshift=0mm,yshift=-0.5mm] above:$t_2$}] (t2) at (3.25,3) {}
          edge [pre]  (p1)
          edge [post] (p3);
        \node [transition,fill=white!25,label={[xshift=0mm,yshift=0.5mm] below:$t_3$}] (t3) at (3.25,1) {}
          edge [pre]  (p2)
          edge [post] (p4);
        \node [transition,fill=green!30,label={[xshift=0mm,yshift=-0.5mm] above:$t_6$}] (t6) at (5.5,2) {}
          edge [pre]  (p3)
          edge [pre]  (p4)
          edge [post] (p6)
          edge [post] (p8);
    \end{tikzpicture}\label{fig:example-process-1}
}
\subfloat[]{\
    \begin{tikzpicture}[scale=0.7,every node/.style={scale=1}]
        \node [place,label={[xshift=0mm,yshift=-0.5mm] above:$i$}] (i) at (0,2) {\ttime{0}};
        \node [place,label={[xshift=0mm,yshift=-0.5mm] above:$p_1$}] (p1) at (2.25,3) {\ttime{1}};
        \node [place,label={[xshift=0mm,yshift=0.5mm] below:$p_2$}] (p2) at (2.25,1) {\ttime{1}};
        \node [place, fill=black!10, label={[xshift=0mm,yshift=-0.5mm] above:$p_3$}] (p3) at (4.25,3) {\ttime{1}};
        \node [place,fill=black!10, label={[xshift=0mm,yshift=0.5mm] below:$p_4$}] (p4) at (4.25,1) {\ttime{2}};
        \node [place,fill=green!30,label={[xshift=-0.5mm,yshift=0mm] right:$p_2$}] (p6) at (6.5,1) {\ttime{1}};
        \node [place,label={[xshift=0mm,yshift=0.5mm] below:$p_7$}] (p7) at (4.25,-0.25) {\ttime{1}};
        \node [place,fill=green!30,label={[xshift=-0.5mm,yshift=0.5mm] right:$p_5$}] (p8) at (6.5,3) {\ttime{2}};
        \node [transition,fill=white!25,label={[xshift=0mm,yshift=-0.5mm] above:$t_1$}] (t1) at (1,2) {}
          edge [pre]  (i)
          edge [post] (p1)
          edge [post] (p2);
        \draw [post] (t1) |- (p7);
        \node [transition,fill=white!25,label={[xshift=0mm,yshift=-0.5mm] above:$t_2$}] (t2) at (3.25,3) {}
          edge [pre]  (p1)
          edge [post] (p3);
        \node [transition,fill=white!25,label={[xshift=0mm,yshift=0.5mm] below:$t_3$}] (t3) at (3.25,1) {}
          edge [pre]  (p2)
          edge [post] (p4);
        \node [transition,fill=green!30,label={[xshift=0mm,yshift=-0.5mm] above:$t_4$}] (t6) at (5.5,2) {}
          edge [pre]  (p3)
          edge [pre]  (p4)
          edge [post] (p6)
          edge [post] (p8);
    \end{tikzpicture}\label{fig:example-process-2}
}
\caption{Nonsequential processes of the net of Fig.~\ref{fig:example-net}}\label{fig:example-process}
\end{center}
\end{figure}

The following is well known. Let  $(P', T', F', \lambda, \mu)$ be a process of $(N, M)$:
\begin{itemize}
\item For every linearization $\sigma = t_1' \ldots t_n'$ of $T'$ respecting the causal order $\preceq$, the sequence $\mu(\sigma) = \mu(t_1') \ldots \mu(t_n')$ is a firing sequence of $(N, M)$. Further, all these firing sequences lead to the same marking. We call it the \emph{final marking} of $\Pi$, and say that $\Pi$ leads from $M$ to its final marking. \\
For example, in Fig.~\ref{fig:example-process} the sequences of the right process labeled by $t_1 t_2 t_3 t_4$ 
and $t_1 t_3 t_2 t_4$  are firing sequences leading to the marking $M=\{p_2,p_5,p_7\}$. 
\item For every firing sequence $t_1 \cdots t_n$ of $(N, M)$ there is a  process 
$(P', T', F', \lambda, \mu)$ such that $T'=\{t_1', \ldots, t_n'\}$, $\mu(t_i') = t_i$ for every 
$1 \leq i \leq n$, and  $\mu(t_i') \preceq \mu(t_j')$ implies $i \leq j$.
\end{itemize}

\subsubsection{Workflow nets.} We slightly generalize the definition of workflow net as presented in e.g.~\cite{Aalst97} by allowing multiple
initial and final places. 
A \emph{workflow} net is a Petri net with two distinguished sets
$I$ and $O$ of \emph{input places} and \emph{output places} such that
(a) $\preset{I} = \emptyset = \postset{O}$ and
(b) for all $x \in P \cup T$, there exists a path
from some $i \in I$ to some $o \in O$ passing through $x$. The markings $M_I$ and $M_O$ are called initial and final markings of $N$.  A workflow net $N$ is \emph{sound} if
\begin{itemize}
    \item $\forall M \in \reach[N]{M_I} : M_{O} \in \reach[N]{M}$,
    \item $\forall M \in \reach[N]{M_{I}} : (M(O) \ge |O|) \Rightarrow (M = M_O)$, and
    \item $\forall t \in T : \exists M \in \reach[N]{M_I} : t$ is enabled at $M$.
\end{itemize}
It is well-known that every sound free-choice workflow net is a 1-safe system with the initial marking $M_I$~\cite{Aalst00,DeselEsparza95}.
Given a workflow net according to this definition one can construct another one with one 
single input place $i$ and output place $o$ and two transitions $t_i, t_o$ with 
$\preset{t_i} = \{i\}, \postset{t_i}=I$ and $\preset{t_o} = O, \postset{t_o}=\{o\}$. For 
all purposes of this paper these two workflow nets are equivalent.

Given a workflow net $N$, we say that a process $\Pi$ of $(N, M_I)$ is a \emph{run} if it leads to $M_O$. 
For example, the net in Fig.~\ref{fig:example-net-run} is a run of the net in Fig.~\ref{fig:example-net}. 

\subsubsection{Petri nets with task durations}  

We consider Petri nets in which, intuitively, when a token arrives in a place $p$ it has to execute a task taking $\tau(p)$ time units before the token can be used to fire any transition. Formally, we consider tuples $N=(P,T,F,\tau)$ where $(P,T,F)$ is a net and $\tau \colon P \rightarrow \N$. 

\begin{definition}\label{def:resbound}
Given a nonsequential process $\Pi = (P',T',F',\lambda,\mu)$ of $(N, M)$, a time bound $t$, and a number of resources $k$, we say that \emph{$\Pi$ is executable within time $t$ with $k$ resources} if there is a function $f \colon P' \rightarrow \N$ such that
\begin{itemize}
\item[(1)] for every $p_1', p_2' \in P'$: if $p_1' \prec p_2'$ then $f(p_1') + \tau(\lambda(p_1')) \leq f(p_2')$;
\item[(2)] for every $p' \in P'$: $f(p') + \tau(\lambda(p')) \leq t$; and 
\item[(3)] for every $0 \leq u < t$ there are at most $k$ places $p' \in P'$ such that $f(p') \leq u < f(p') + \tau(p')$.
\end{itemize}
\noindent We call a function $f$ satisfying (1) a \emph{schedule}, a function satisfying (1) and (2) a \emph{$t$-schedule}, and a function satsifying (1)-(3) a \emph{$(k,t)$-schedule} of $\Pi$. 
\end{definition}
Intuitively, $f(p')$ describes the starting time of the task executed at $p'$. Condition (1)
states that if  $p_1' \preceq p_2'$, then the task associated to $p_2'$ can only start after the task for $p_1'$ has ended; condition (2) states that all tasks are done by time $t$, and condition (3) that at any moment in time at most $k$ tasks are being executed. As an example, the process in Fig.~\ref{fig:example-net-run} can be executed with two resources in time $6$ with the schedule  $i, p_1, p_2 \mapsto 0$; $p_3, p_4 \mapsto 1$; $ p_7, p_6 \mapsto 3$, and $p_8, p_9 \mapsto 4$. 

Given a process $\Pi = (P',T',F',\lambda,\mu)$ of $(N, M)$ we define the schedule  $f_{\min}$ as follows: if $p' \in \min(\Pi)$ then $f_{\min}(p')=0$,
otherwise define $f_{\min}(p') = \max\{f_{\min}(p'') + \tau(\lambda(p'')) \mid p'' \preceq  p' \}$. Further, we define
the \emph{minimal execution time} $t_{\min}(\Pi) =\max\{ f(p') + \tau(\lambda(p'')) \mid p' \in \max(\Pi) \}$. In the process in Fig.~\ref{fig:example-net-run}, the schedule $f_{\min}$ is the function that assigns $i, p_1, p_2, p_7~\mapsto~0$, $p_3, p_4~\mapsto~1$, $p_6, p_8~\mapsto~3$, $p_9~\mapsto~4$, and $o~\mapsto~6$, and so $t_{\min}(\Pi)= 6$. We have:

\begin{lemma}
A process $\Pi = (P',T',F',\lambda,\mu)$ of $(N, M)$ can be executed within time $t_{\min}(\Pi)$ with $|P'|$ resources, and cannot be executed faster with any number of resources.
\end{lemma}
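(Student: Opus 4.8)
The plan is to use the schedule $f_{\min}$ for both halves of the statement: it witnesses executability in time $t_{\min}(\Pi)$ with $|P'|$ resources, and it is pointwise the least of all schedules of $\Pi$, which forces every schedule to take time at least $t_{\min}(\Pi)$.

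First I would observe that the recursive definition of $f_{\min}$ is well founded: $\Pi$ is a finite acyclic net, so $\preceq$ restricted to $P'$ is a finite partial order, and we may both define $f_{\min}$ and reason about it by induction along $\preceq$. Then I would check that $f_{\min}$ is a $(|P'|, t_{\min}(\Pi))$-schedule. Condition (1) is immediate: if $p_1' \prec p_2'$ then $f_{\min}(p_1') + \tau(\lambda(p_1'))$ occurs as one of the terms of the maximum defining $f_{\min}(p_2')$, hence it is $\le f_{\min}(p_2')$. For condition (2), fix $p' \in P'$. If $p' \in \max(\Pi)$ then $f_{\min}(p') + \tau(\lambda(p')) \le t_{\min}(\Pi)$ by the definition of $t_{\min}$; otherwise pick a causally maximal $p''$ with $p' \prec p''$, and condition (1) for $f_{\min}$ gives $f_{\min}(p') + \tau(\lambda(p')) \le f_{\min}(p'') \le f_{\min}(p'') + \tau(\lambda(p'')) \le t_{\min}(\Pi)$. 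Condition (3) holds trivially with $k = |P'|$: for any $u$ the total number of places is $|P'|$, so in particular at most $|P'|$ of them satisfy $f_{\min}(p') \le u < f_{\min}(p') + \tau(\lambda(p'))$. This establishes the first half.

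For the second half, suppose $\Pi$ is executable within time $t$ with some number $k$ of resources, witnessed by a $(k,t)$-schedule $f$; I would show $t \ge t_{\min}(\Pi)$, which gives the claim since $k$ is arbitrary. The key step is to prove, by induction along $\preceq$, that $f(p') \ge f_{\min}(p')$ for every $p' \in P'$. If $p' \in \min(\Pi)$ this is clear, since $f_{\min}(p') = 0$ and $f$ takes values in $\N$. If $p' \notin \min(\Pi)$, then for every $p'' \prec p'$ condition (1) for $f$ gives $f(p') \ge f(p'') + \tau(\lambda(p''))$, and the induction hypothesis gives $f(p'') \ge f_{\min}(p'')$; taking the maximum over such $p''$ yields $f(p') \ge f_{\min}(p')$. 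Now choose $p' \in \max(\Pi)$ realizing the maximum in the definition of $t_{\min}(\Pi)$; then by condition (2) for $f$ we get $t \ge f(p') + \tau(\lambda(p')) \ge f_{\min}(p') + \tau(\lambda(p')) = t_{\min}(\Pi)$.

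I do not expect a genuine obstacle here: the lemma amounts to the statement that $f_{\min}$ is a least (critical-path) schedule and that with as many resources as there are places the capacity constraint (3) never binds. The only points needing a little care are the well-foundedness of the recursion for $f_{\min}$ (settled by acyclicity) and the two small inductions along $\preceq$ — one showing that $f_{\min}$ satisfies (1) and is monotone enough for (2), and one showing $f \ge f_{\min}$ for an arbitrary schedule $f$.
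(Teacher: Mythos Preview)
Your proposal is correct and follows the same approach as the paper: use $f_{\min}$ as the witness, note that condition~(3) is vacuous with $|P'|$ resources, and argue that $t_{\min}(\Pi)$ is the least time for which (1) and (2) can be satisfied. The paper's proof simply asserts these facts in two sentences, whereas you spell out the two inductions along $\preceq$ that justify them; your version is a faithful elaboration rather than a different route.
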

\begin{proof}
For $k \geq |P'|$ resources condition (3) of Definition~\ref{def:resbound} holds vacuously.
$\Pi$ is executable within time $t$ if{}f conditions (1) and (2) hold. Since $f_{\min}$ satisfies (1) and (2)  for $t = t_{\min}(\Pi)$,
$\Pi$ can be executed within time $t_{\min}(\Pi)$. Further, $t_{\min}(\Pi)$ is the smallest time for which (1) and (2) can hold,
and so $\Pi$ cannot be executed faster with any number of resources.
\end{proof}

\section{Resource threshold}\label{sec:rt}
We define the resource threshold of a run of a workflow net, and of the net itself.  Intuitively, the resource threshold of a run is the minimal number of resources that allows one to execute it as fast as with unlimited resources, and the resource threshold of a workflow net  is the minimal number of resources that allows one to execute \emph{every run} as fast as with unlimited resources.

\begin{definition} 
Let $N$ be a workflow net, and let $\Pi$ be a run of $N$. The \emph{resource threshold of $\Pi$}, denoted by $\RT{\Pi}$ is the smallest number $k$ such that $\Pi$ can be executed in time $t_{\min}(\Pi)$ with $k$ resources.  A schedule of $\Pi$ \emph{realizes} the resource threshold if it is a $(\RT{\Pi}, t_{\min}(\Pi))$-schedule. 

The \emph{resource threshold} of $N$, denoted by $\RT{N}$, is defined by $\RT{N} = \max\{ \RT{\Pi} \mid \Pi \mbox{ is a run of $(N, M_I)$} \}$. A \emph{schedule of $N$} is a function that assigns to every process $\Pi \in {\cal NP}(N, M)$ a schedule of $\Pi$. A schedule of $N$ is a $(k, t)$-schedule if it assigns to every run $\Pi$ a $(k, t)$-schedule of $\Pi$. A schedule of $N$ \emph{realizes} the resource threshold if it assigns to every run $\Pi$ a $(\RT{N}, t_{\min}(\Pi))$-schedule. 
\end{definition}

\begin{example}
    We have seen in the previous section that for the process in Fig.~\ref{fig:example-net-run} we have $t_{\min}(\Pi)= 6$, and a schedule with two resources already achieves this time. So the resource bound is $2$. The workflow net of Fig.~\ref{fig:intro-example} has infinitely many runs, in which 
loosely speaking, the net executes $t_4$ arbitrarily many times, until it ``exits the loop'' by choosing $t_6$, followed by $t_7$ and $t_8$. It can be shown that all processes have resource threshold $2$, and so that is also the resource threshold of the net.
\end{example}

In the rest of the section we obtain two negative results about the result threshold. First, it is difficult to compute: Determining if the resource threshold exceeds a given threshold is NP-complete even for acyclic marked graphs, a very simple class of workflows. 
Second, we show that even for acyclic free-choice workflow nets the resource threshold may not be realized by any online scheduler.

\subsection{Resource threshold is NP-complete for acyclic marked graphs}
We prove that deciding if the resource threshold exceeds a given bound is NP-complete even for acyclic sound marked graphs.
The proof proceeds by reduction from the following classical scheduling problem, proved NP-complete in~\cite{Ullmann75}:
\begin{quote}
{\bf Given}:  a finite, partially ordered set of jobs with non-negative integer durations, and non-negative integers $t$ and $k$.\\
{\bf Decide}: Can all jobs can be executed with $k$ machines within $t$ time units in a way that respects the given partial order, i.e.,
a job is started only after all its predecessors have been finished?
\end{quote}
More formally, the problem is defined as follows: Given jobs ${\cal J}=\{J_1, \ldots, J_n\}$, where $J_i$ has duration $\tau(J_i)$ for every $1 \leq i \leq n$, and a partial order $\preceq$ on ${\cal J}$, does there exist a function $f \colon {\cal J} \rightarrow \N$ such that
\begin{itemize}
\item[(1)] for every $1\leq i, j \leq n$: if $J_i \prec J_j$ then $f(J_i) + \tau(J_i) \leq f(J_j)$;
\item[(2)] for every $1 \leq i \leq n$: $f(J_i) + \tau(J_i) \leq t$; and 
\item[(3)] for every $0 \leq u < t$ there are at most $k$ indices $i$ such that $f(J_i) \leq u < f(J_i) + \tau(J_i)$.
\end{itemize}

These conditions are almost identical to the ones we used to define if a nonsequential process can be executed within time $t$ with $k$ resources. We exploit this to construct an acyclic workflow marked graph that ``simulates'' the scheduling problem. A proof of the following theorem can be found in the Appendix.

\begin{restatable}{theorem}{thmNPMG}\label{thm:NPMG}
The following problem is NP-complete:
\begin{quote}
{\bf Given}:  An acyclic, sound workflow marked graph $N$, and a number $k$. \\
{\bf Decide}:  Does $\RT{N} \leq k$ hold?
\end{quote}
\end{restatable}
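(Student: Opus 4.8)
The plan is to prove membership in NP and NP-hardness separately. For membership, given an acyclic sound workflow marked graph $N$ and a bound $k$, observe that since $N$ is acyclic it has only finitely many runs, but there may be exponentially many; however, to certify that $\RT{N} > k$ it suffices to exhibit a single run $\Pi$ together with a witness that $\Pi$ has no $(k,t_{\min}(\Pi))$-schedule. A run of an acyclic workflow net has size polynomial in $N$ (each transition fires at most once, since the net is 1-safe and acyclic), so $\Pi$ and $t_{\min}(\Pi)$ can be written down in polynomial space, and $t_{\min}(\Pi)$ can be computed in polynomial time by the recursion defining $f_{\min}$. The remaining difficulty is that ``$\Pi$ has no $(k,t_{\min}(\Pi))$-schedule'' is a universally quantified (co-NP-looking) statement. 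To handle this, I would instead guess the schedule $f$ as well and verify conditions (1)--(3) of Definition~\ref{def:resbound} directly; that is, $\RT{N}\le k$ is equivalent to: for every run $\Pi$ there exists a $(k,t_{\min}(\Pi))$-schedule. This is $\Pi_2$-looking in general, so for NP membership I would work with the complement and show that deciding $\RT{N} > k$ is in NP only after observing that the scheduling decision for a \emph{fixed} run is itself NP; so more carefully, the clean statement is that the problem ``$\RT{N}\le k$'' is NP-complete, and membership requires the standard trick of noting that for a deterministic workflow (marked graph with no choices) there is essentially one maximal run, so $\RT{N}=\RT{\Pi}$ for the unique maximal run $\Pi$, and then $\RT{N}\le k$ reduces to the existence of a $(k,t_{\min}(\Pi))$-schedule of $\Pi$, which is in NP by guessing $f$ and checking (1)--(3) in polynomial time.

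For NP-hardness, I would reduce from the partially ordered scheduling problem of Ullman~\cite{Ullmann75} recalled just before the theorem. Given jobs ${\cal J}=\{J_1,\dots,J_n\}$ with durations $\tau(J_i)$ and partial order $\preceq$, and integers $t,k$, the idea is to build an acyclic marked graph $N$ whose unique maximal run is (isomorphic to) the partial order $({\cal J},\preceq)$, with place $p_i$ carrying task duration $\tau(J_i)$. Concretely: take one place $p_i$ per job; for each covering pair $J_i \lessdot J_j$ in the Hasse diagram one needs to route a token from $p_i$ to $p_j$, but a marked graph place has $|\preset{p}|\le 1$ and $|\postset{p}|\le 1$, so I would insert a transition $t_{ij}$ between suitable auxiliary copies, using fork/join transitions to duplicate a token when a job has several successors and to synchronise when a job has several predecessors. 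Add a single input place $i$ feeding (via a fork transition) exactly the minimal jobs, and a single output place $o$ fed (via a join transition) by exactly the maximal jobs; set $\tau(i)=\tau(o)=0$ and $\tau=0$ on all auxiliary places. Then the unique run $\Pi$ of $N$ has causal order exactly $\preceq$ on the $p_i$'s, the zero-duration auxiliary places contribute nothing to schedules or resource usage, $t_{\min}(\Pi)$ equals the optimal makespan of the job instance, and a $(k',t)$-schedule of $\Pi$ with $k'$ resources and deadline $t$ exists iff the job instance is schedulable with $k$ machines in time $t$ — so one sets the queried bound to $k$ and, if needed, pads to ensure $t=t_{\min}(\Pi)$ by a standard trick (add a chain of dummy zero-duration jobs or argue directly about the threshold). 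Finally I must check that $N$ is a sound workflow marked graph: every node lies on an $i$-to-$o$ path by construction, acyclicity is inherited from $\preceq$, and soundness of an acyclic marked graph of this shape (a single token flowing from $i$ to $o$, all joins eventually firing) is routine.

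The main obstacle I expect is the bookkeeping that ties the \emph{resource threshold} $\RT{N}$ — a ``$t_{\min}$-tight'' quantity — to the scheduling problem's fixed deadline $t$. The resource threshold asks for the fewest resources achieving time exactly $t_{\min}(\Pi)$, whereas Ullman's problem fixes an arbitrary $t$; so a direct reduction must engineer the gadget so that $t_{\min}(\Pi)=t$ on the nose. One clean way: before reducing, normalise the job instance so that the optimal makespan with unbounded machines already equals $t$ (if the critical-path length exceeds $t$ the instance is a trivial no; otherwise append to each maximal job a fresh successor dummy job of duration $t - (\text{its finish time on the longest chain})$, all forced to run on no resource by giving them duration... — since nonzero durations cost resources, instead pad the \emph{time axis} by inserting a single fresh job of duration equal to the slack that must run strictly after everything, making the critical path length exactly $t$ without increasing the needed machine count beyond $\max(k,1)$). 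Getting this padding to preserve both the answer and the ``$=t_{\min}$'' tightness, while keeping the net a marked graph, is the delicate part; everything else is a straightforward simulation argument appealing to the already-noted correspondence between Definition~\ref{def:resbound} and conditions~(1)--(3) of the scheduling problem.
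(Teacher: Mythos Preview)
Your overall plan matches the paper's: NP membership from the observation that an acyclic sound marked graph has a single run (isomorphic to the net itself), so one just guesses a schedule $f$ and checks conditions (1)--(3); and NP-hardness by reducing from Ullman's partially ordered scheduling problem, encoding jobs as places and precedences via auxiliary zero-duration places between per-job transitions. That is exactly what the paper does.

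The one genuine gap is your padding step. Your sequential-padding idea (``insert a single fresh job of duration equal to the slack that must run strictly after everything'') does \emph{not} preserve the answer: if the original critical path has length $c\le t$ and you append a job of duration $t-c$ after the final join, then yes $t_{\min}$ becomes $t$, but now every original job must finish by time $c$, not by time $t$. So $\RT{N}\le k$ would ask whether the original instance is schedulable with $k$ machines in time $c$, which is a strictly harder (and different) question than Ullman's ``time~$t$'' question. The reduction would be unsound. Your earlier idea of per-maximal-job successors with nonzero duration runs into the resource-counting problem you yourself noticed, and with zero duration they cannot stretch the critical path.

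The paper's fix is to pad \emph{in parallel}, not in series: first check in polynomial time whether $t_{\min}(N)\le t$ (if not, output a trivial no-instance); otherwise add a single place $p_t$ with $\tau'(p_t)=t$ and arcs $(t_I,p_t),(p_t,t_O)$, so $p_t$ sits in parallel with the entire job graph. This forces $t_{\min}=t$ without constraining when the original jobs finish, and it consumes exactly one resource for the whole duration. Hence one sets $k'=k+1$, and $\RT{N}\le k'$ holds iff the original jobs are schedulable with $k$ machines in time~$t$. Once you replace your sequential padding by this parallel padding (with the $k\mapsto k+1$ shift), your argument goes through and coincides with the paper's.
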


\subsection{Acyclic free-choice workflow nets may have no optimal online schedulers}
A resource threshold of $k$ guarantees that every run \emph{can} be executed without penalty with $k$ resources. In other words, \emph{there exists} a schedule that achieves optimal runtime. However, in many applications the schedule must be determined at runtime, that is, the resources must be allocated without knowing how choices will be resolved in the future. In order to formalize this idea we define  the notion of an \emph{online schedule} of a workflow net $N$.

\begin{definition}
Let $N$ be a Petri net, and let $\Pi$ and $\Pi'$ be two processes of $(N, M)$. 
We say that $\Pi$ is a \emph{prefix} of $\Pi'$, denoted by $\Pi \lhd \Pi'$, if there is a sequence $\Pi_1, \ldots, \Pi_n$
of processes such that $\Pi_1=\Pi$, $\Pi_n=\Pi'$, and $\Pi_{i+1}$ extends $\Pi_i$ by one transition for every $1 \leq i \leq n-1$.

Let $f$ be a schedule of $(N, M)$, i.e.,~a function assigning a schedule to each process. We say that $f$ is an \emph{online schedule} if for every two runs $\Pi_1, \Pi_2$, and for every two prefixes $\Pi_1' \lhd \Pi_1$ and $\Pi_2' \lhd \Pi_2$: If $\Pi_1'$ and $\Pi_2'$ are isomorphic, then $f(\Pi_1') = f(\Pi_2')$.
\end{definition}

Intuitively, if $\Pi_1'$ and $\Pi_2'$ are isomorphic then they are the same process $\Pi$, which in the future can be extended to either $\Pi_1$ or $\Pi_2$, depending on which transitions occur. In an online schedule, $\Pi$ is scheduled in the same way,  independently of whether
it will become $\Pi_1$ or $\Pi_2$ in the future. We show that even 
for acyclic free-choice workflow nets there may be no online schedule that realizes the resource threshold. That is, even though for every run it is possible to schedule the tasks with $\RT{N}$ resources to achieve optimal runtime, this requires knowing how it will evolve before the execution of the workflow.

\begin{proposition}
There is an acyclic, sound free-choice workflow net for which no online schedule realizes the resource threshold.
\end{proposition}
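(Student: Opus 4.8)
The plan is to exhibit one small counterexample net $N$ and check it by hand; the guiding idea is to put a ``long'' task into the part of $N$ executed before a free choice, arrange that one of the two post-choice branches leaves this task ample slack while the other crams mandatory work into exactly the time slot the task needs, and tune the durations so that the global resource threshold is too small for an online scheduler to hedge. Concretely, $N$ would be a parallel composition of two threads joined at the end: the initial transition $t_i$ marks a place $x$ with $\tau(x)=4$ and a place $d$ with $\tau(d)=0$; the place $d$ is the only one with two output transitions $u_1,u_2$ (the free choice). Branch $u_1$ just forwards, via a transition $h_1$, a $0$-duration token into a place $g$. Branch $u_2$ produces two places $a_1,a_2$ with $\tau(a_1)=\tau(a_2)=2$, joined by a transition $c$ into a place $m$ with $\tau(m)=2$, forwarded by $w$ into $b$ with $\tau(b)=2$, and forwarded by $h_2$ into the same place $g$. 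A final transition $t_o$ with preset $\{x,g\}$ produces the output place $o$, $\tau(o)=0$. One checks routinely that $N$ is acyclic, is a workflow net (every node lies on an $i$--$o$ path), is free-choice (only $d$ has a branching postset; the only places whose postsets coincide are $a_1,a_2$ and $x,g$), and is sound because it is block-structured (a parallel composition of the $x$-thread with a choice between a trivial and a ``heavy'' thread, re-joined at $t_o$); hence it is $1$-safe.

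Next I would compute $\RT{N}$. There are exactly two runs. In the run $R_1$ through $u_1$, the only timed task is $x$, so $t_{\min}(R_1)=4$; since $o$ is causally above $x$ and must be reached by time $4$, every $(2,4)$-schedule of $R_1$ forces $f(x)=0$, and in fact $\RT{R_1}=1$. In the run $R_2$ through $u_2$ the additional timed tasks $a_1,a_2,m,b$ form the tight chain $a_j\prec m\prec b$ of total duration $6$ below $o$, so $t_{\min}(R_2)=6$, and in any $(2,6)$-schedule both $a_1$ and $a_2$ are forced to start at time $0$; thus both resources are busy throughout $[0,2)$ and therefore $f(x)\ge 2$. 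Conversely $R_2$ does admit a $(2,6)$-schedule (e.g.\ $a_1,a_2\mapsto 0$; $m,x\mapsto 2$; $b\mapsto 4$), while its total timed work $4+2+2+2+2=12=2\cdot 6$ rules out a single resource; hence $\RT{R_2}=2$ and $\RT{N}=2$.

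Then I would derive the contradiction. Let $\Pi_0$ be the process consisting of the transition $t_i$ and the two places labelled $x$ and $d$; it is a prefix of both $R_1$ (extend by $u_1,h_1,t_o$) and $R_2$ (extend by $u_2,c,w,h_2,t_o$), and it contains the task $x$. An online schedule must commit to a single schedule for $\Pi_0$ — in particular a single start time for $x$ — that is consistent with both extensions, since the copies of $\Pi_0$ inside $R_1$ and inside $R_2$ are isomorphic. But realizing $\RT{N}=2$ forces this start time to be $0$ (from $R_1$) and at least $2$ (from $R_2$), a contradiction; hence no online schedule realizes $\RT{N}$.

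The step I expect to be the real obstacle is the design of $N$ itself, precisely because of the free-choice constraint: since $x$ is produced before the choice, no transition may consume it differently in the two runs, so the branch cannot ``claim'' $x$ directly — the differing resource pressure on $x$ must be manufactured entirely by post-choice tasks competing with it for resources, while the heavy branch must at the same time stay schedulable within the global threshold. This is a tight numerical balancing act (here $\tau(x)=4$, two mandatory duration-$2$ tasks sitting on a length-$6$ chain, threshold $2$, with the heavy run exactly saturating both resources). A minor but necessary design point is that the two branches must reconverge at the common place $g$ before meeting the $x$-thread at $t_o$, so that no pre-choice place is an input of two choice-dependent transitions, which would violate free-choiceness.
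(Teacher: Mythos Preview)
Your proof is correct and follows the same template as the paper: exhibit an explicit acyclic sound free-choice workflow net with two runs sharing a common prefix, and show that any schedule realizing the resource threshold must assign incompatible start times to a task in that prefix. The verification steps you outline (free-choiceness, soundness, the forcing of $f(x)=0$ in $R_1$ via the tight chain $x\prec o$, and the forcing of $f(x)\ge 2$ in $R_2$ via $a_1,a_2$ saturating both resources on $[0,2)$) all go through.

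Your counterexample is genuinely simpler than the paper's. The paper builds a net with eleven places, seven transitions, and resource threshold~$3$; both runs have the same $t_{\min}=5$, and the forcing argument for the ``red'' run requires a short case analysis on which resource is bound to which task. Your net achieves threshold~$2$ with roughly half as many nodes, and both forcing arguments reduce to reading off a single critical chain---in $R_1$ the chain $x\prec o$ pins $f(x)=0$, and in $R_2$ the chain $a_j\prec m\prec b\prec g\prec o$ pins $f(a_1)=f(a_2)=0$, after which the resource bound alone forces $f(x)\ge 2$. The trade-off is that your two runs have different $t_{\min}$ values ($4$ versus $6$), which is perfectly fine for the proposition but means your example does not additionally illustrate the phenomenon with a uniform deadline. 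Your remark about reconverging the two branches at the auxiliary place $g$ before the join with $x$---to avoid $x$ feeding two distinct join transitions and breaking free-choiceness---is exactly the right structural observation.
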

\begin{proof}
Consider the sound free-choice workflow net $(N, M_I)$ of Fig.~\ref{fig:noonlinescheduler}.
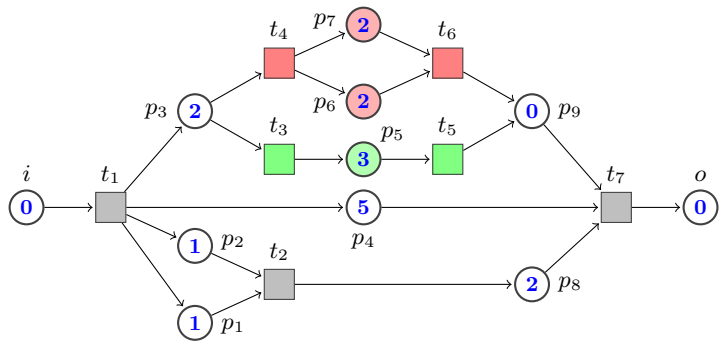
\begin{figure}
    \begin{center}
    \begin{tikzpicture}[scale=0.8,every node/.style={scale=1.0}]

\def\vs{0.8}
\def\hs{1.4}

            \node [place,label={[xshift=0mm,yshift=0mm] above:$i$}] (i) at (0*\hs,0) {\ttime{0}};
            \node [place,label={[xshift=0mm,yshift=-0.5mm] right:$p_1$}] (p1) at (2*\hs,-2.4*\vs) {\ttime{1}};
            \node [place,label={[xshift=0mm,yshift=0.5mm] right:$p_2$}] (p2) at (2*\hs,-0.8*\vs) {\ttime{1}};
            \node [place,label={[xshift=0mm,yshift=0mm] left:$p_3$}] (p3) at (2*\hs,2*\vs) {\ttime{2}};
            \node [place,label={[xshift=0mm,yshift=0mm] below:$p_4$}] (p4) at (4*\hs,0*\vs) {\ttime{5}};
            \node [place,fill=green!30,label={[xshift=-0.5mm,yshift=-0.5mm] above right:$p_5$}] (p5) at (4*\hs,1*\vs) {\ttime{3}};
            \node [place,fill=red!30,label={[xshift=0mm,yshift=-0.5mm] left:$p_6$}] (p6) at (4*\hs,2.2*\vs) {\ttime{2}};
            \node [place,fill=red!30,label={[xshift=0mm,yshift=0.5mm] left:$p_7$}] (p7) at (4*\hs,3.8*\vs) {\ttime{2}};
            \node [place,label={[xshift=0mm,yshift=0mm] right:$p_8$}] (p8) at (6*\hs,-1.6*\vs) {\ttime{2}};
            \node [place,label={[xshift=0mm,yshift=0mm] right:$p_9$}] (p9) at (6*\hs,2*\vs) {\ttime{0}};
            \node [place,label={[xshift=0mm,yshift=0mm] above:$o$}] (o) at (8*\hs,0*\vs) {\ttime{0}};

            \node [transition,label=above:$t_1$] (t1) at (1*\hs,0*\vs) {}
              edge [pre]  (i)
              edge [post] (p1) edge [post] (p2) edge [post] (p3) edge [post] (p4);
            \node [transition,label=above:$t_2$] (t2) at (3*\hs,-1.6*\vs) {}
              edge [pre]  (p1) edge [pre]  (p2)
              edge [post] (p8);
            \node [transition,fill=green!50, label=above:$t_3$] (t3) at (3*\hs,1*\vs) {}
              edge [pre]  (p3)
              edge [post] (p5);
            \node [transition,fill=red!50,label=above:$t_4$] (t4) at (3*\hs,3*\vs) {}
              edge [pre]  (p3)
              edge [post] (p6) edge [post] (p7);
            \node [transition,fill=green!50,label=above:$t_5$] (t5) at (5*\hs,1*\vs) {}
              edge [pre]  (p5)
              edge [post] (p9);
            \node [transition,fill=red!50, label=above:$t_6$] (t6) at (5*\hs,3*\vs) {}
              edge [pre]  (p6) edge [pre]  (p7)
              edge [post] (p9);
            \node [transition,label=above:$t_7$] (t7) at (7*\hs,0*\vs) {}
              edge [pre]  (p4) edge [pre]  (p8) edge [pre]  (p9)
              edge [post] (o);
        \end{tikzpicture}
    \end{center}
    \caption{A workflow net with two runs. No online scheduler for three resources 
    achieves the minimal runtime in both runs.}\label{fig:noonlinescheduler}
\end{figure}
It has two runs: $\Pi_g$, which executes the grey and green transitions,  and $\Pi_r$, which executes the grey and red transitions. Their resource thresholds are $\RT{\Pi_g}=\RT{\Pi_r}=3$, realized by the following schedules $f_g$ and $f_r$ in Fig.~\ref{fig:schedules}:

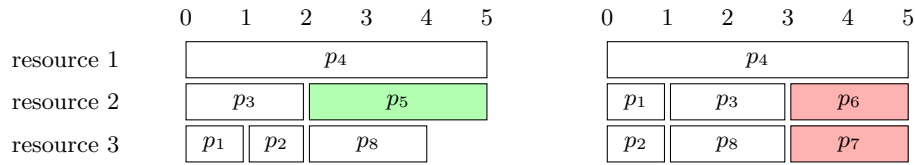
\begin{figure}[H]
\centering
\begin{tikzpicture}[scale=0.8,every node/.style={scale=1.0}]
\foreach \x in {0,1,...,5} {\node (n1) at (\x,1) {$\x$}; }; 
\node (r1) at (-2.0,0.3) {resource 1}; \draw (0,0) rectangle (5, 0.6) node[pos=.5]{$p_4$};
\node (r2) at (-2.0,-0.4) {resource 2};\draw (0,-0.7) rectangle (1.95, -0.1) node[pos=.5]{$p_3$}; \draw[fill=green!30] (2.05,-0.7) rectangle (5, -0.1) node[pos=.5]{$p_5$};
\node (r3) at (-2.0,-1.1) {resource 3};\draw (0,-1.4) rectangle (0.95, -0.8) node[pos=.5]{$p_1$}; \draw (1.05,-1.4) rectangle (1.95, -0.8) node[pos=.5]{$p_2$};
\draw (2.05,-1.4) rectangle (4, -0.8) node[pos=.5]{$p_8$};

\def\st{7}
\foreach \x in {0,1,...,5} {\node (n1) at (\x+\st,1) {$\x$}; }; 
\draw (0+\st,0) rectangle (5+\st, 0.6) node[pos=.5]{$p_4$};
\draw (0+\st,-0.7) rectangle (0.95+\st, -0.1) node[pos=.5]{$p_1$}; \draw (1.05+\st,-0.7) rectangle (2.95+\st, -0.1) node[pos=.5]{$p_3$};
\draw[fill=red!30] (3.05+\st,-0.7) rectangle (5+\st, -0.1) node[pos=.5]{$p_6$};
\draw (0+\st,-1.4) rectangle (0.95+\st, -0.8) node[pos=.5]{$p_2$}; \draw (1.05+\st,-1.4) rectangle (2.95+\st, -0.8) node[pos=.5]{$p_8$};
\draw[fill=red!30] (3.05+\st,-1.4) rectangle (5+\st, -0.8) node[pos=.5]{$p_7$};

\end{tikzpicture}
\caption{Schedules $f_g$ and $f_r$ for the two runs $\Pi_g$ and $\Pi_r$ of the net of Fig.~\ref{fig:noonlinescheduler}.}
\label{fig:schedules}
\end{figure}

Indeed, observe that $f_g$ and $f_r$ execute $\Pi_g$ and $\Pi_r$ within time $5$, and even
with unlimited resources no schedule can be faster because of the task $p_4$,
while two or fewer resources are insufficient to execute either run within time $5$.

The schedule of $(N, M_I)$ that assigns $f_g$ and $f_r$ to $\Pi_g$ and $\Pi_r$ is not an online schedule. Indeed, the process containing one single transition labeled by $t_1$ and places labeled by $i, p_1, p_2, p_3$ is isomorphic to prefixes of $\Pi_g$ and 
$\Pi_r$. However, we have $f_g(p_3) =0 \neq 1 = f_r(p_3)$. We now claim: 
\begin{itemize}
\item[(a)] Every schedule $f_g$ of $\Pi_g$ that realizes the resource threshold (time $5$ with 3 resources) satisfies $f_g(p_3)=0$.\\
Indeed, if $f_g(p_3)\geq 1$, then $f_g(p_5) \geq 3$, $f_g(p_9) \geq 6$, and finally $f_g(o)\geq 6$, so $f_g$ does not meet the time bound.
\item[(b)] Every schedule $f_r$ of $\Pi_r$ that realizes the resource threshold (time $5$ with 3 resources) satisfies $f_r(p_3) > 0$.\\
Observe first that we necessarily have $f_r(p_4)=0$, and so a resource, say $R_1$, is bound to $p_4$ during the complete execution of the workflow, leaving two resources left. Assume $f_r(p_3) = 0$, i.e., a second resource, say $R_2$, is bound to $p_3$ at time $0$, leaving one resource left, say $R_3$. Since both $p_1$ and $p_2$ must be executed before $p_8$, and only $R_3$ is free until time $2$, we get $f_r(p_8) \geq 2$.  So at time $2$ we still have to execute $p_6, p_7, p_8$ with resources $R_2, R_3$. Therefore, two out of $p_6, p_7, p_8$ must be executed sequentially by the same resource. Since $p_6, p_7, p_8$ take $2$ time units each,  one of the two resources needs time $4$, and we get $f_r(o) \geq 6$. 
\end{itemize}
By this claim, at time $0$, an online schedule has to decide whether to allocate a resource to $p_3$ or not, without knowing which of $t_3$ or $t_4$ will be executed in the future. If it schedules $f(p_3)=0$ and later $t_4$ occurs, then $\Pi_r$ is executed and the deadline of 5 time units is not met. The same occurs if it schedules $f(p_3)>0$, and later $t_3$ occurs.  
\end{proof}

\section{Concurrency threshold}\label{sec:ct}
Due to the two negative results presented in the previous section, we study a different parameter,
introduced in~\cite{BVT16}, called the concurrency threshold.
During execution of a business process, information on the resolution of future choices is often not available, and further no information on the possible duration of a task (or only weak bounds) are known. Therefore, the scheduling is performed in practice by assigning a resource to a task at the moment some resource becomes available.
The question is: What is the minimal number of resources needed to guarantee the optimal execution time
achievable with an unlimited number of resources?

The answer is simple: since there is no information about the duration of tasks, every reachable marking of the workflow net without durations
may be also reached for some assignment of durations.
Let $M$ be a reachable marking with a maximal number of tokens, say $k$, in places with positive duration, and let $d_1 \leq d_2 \leq \cdots \leq d_k$ be the durations  of their associated tasks. If less than $k$ resources are available, and we do not assign a resource to the task with duration $d_k$, we introduce a delay with respect to the case of an unlimited number of resources. On the contrary, if the number of available resources is $k$, then the scheduler for $k$ resources can always simulate the behaviour of the scheduler for an unlimited
number of resources.

\begin{definition}
Let $N=(P,T,F,I,O,\tau)$ be a workflow Petri net. For every marking $M$ of $N$, define the \emph{concurrency} of $M$ as 
$\conc{M} \defeq \sum_{p \in D} M(p)$, where $D \subseteq P$ is the set of places $p \in P$ such that $\tau(p) > 0$.
The \emph{concurrency threshold of $N$} is defined by
    \begin{align*}
        \CT{N} \defeq \max \set{ \conc{M} \mid M \in \reach[N]{M} }.
    \end{align*}
\end{definition}

The following lemma follows easily from the definitions.

\begin{lemma}
For every workflow net $N$: $\RT{N} \leq \CT{N}$.
\end{lemma}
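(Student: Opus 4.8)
The plan is to show that whenever a run $\Pi$ of $N$ needs more than $\CT{N}$ resources, we reach a contradiction; equivalently, every run can be executed in time $t_{\min}(\Pi)$ using at most $\CT{N}$ resources. Fix a run $\Pi = (P',T',F',\lambda,\mu)$ of $(N,M_I)$ and consider the schedule $f_{\min}$ defined in the preliminaries. By the Lemma on $f_{\min}$, $f_{\min}$ is a $t_{\min}(\Pi)$-schedule, so conditions (1) and (2) of Definition~\ref{def:resbound} hold. It remains to bound the number of simultaneously executing tasks, i.e.\ to verify condition (3) with $k = \CT{N}$: for every $0 \le u < t_{\min}(\Pi)$, the set $A_u \defeq \{\, p' \in P' \mid f_{\min}(p') \le u < f_{\min}(p') + \tau(\lambda(p')) \,\}$ has at most $\CT{N}$ elements.

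The key step is to relate $A_u$ to a reachable marking of $N$. First observe that the places in $A_u$ all have positive duration (if $\tau(\lambda(p'))=0$ then $f_{\min}(p') + \tau(\lambda(p')) = f_{\min}(p') \le u$ is false for the strict inequality $u < f_{\min}(p')$ to fail simultaneously with $f_{\min}(p') \le u$). Second, and this is the heart of the argument, $A_u$ is an \emph{antichain} of $\Pi$ with respect to $\preceq$: if $p_1' \prec p_2'$ with both in $A_u$, then condition (1) gives $f_{\min}(p_1') + \tau(\lambda(p_1')) \le f_{\min}(p_2') \le u$, contradicting $u < f_{\min}(p_1') + \tau(\lambda(p_1'))$. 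Now I want to claim that any antichain $B$ of a process $\Pi$ of $(N,M_I)$ whose downward-completion is a ``cut'' corresponds to a reachable marking: concretely, extend $B$ downward to a set $C$ of places such that $C$ is a maximal antichain of the subprocess generated by $B$ and its causes — i.e.\ take the final marking of the prefix $\Pi' \lhd \Pi$ consisting of exactly the transitions that are below some place of $A_u$ together with, for the not-yet-consumed tokens, the minimal places. Standard process theory (the first bulleted fact after the definition of processes) tells us the final marking $M'$ of $\Pi'$ is reachable in $(N,M_I)$, and by construction $A_u \subseteq \{\widehat p \in \Pi' : \widehat p \text{ is causally maximal}\}$, so $\lambda$ maps $A_u$ injectively (1-safeness: distinct causally-maximal places of a process of a 1-safe system carry distinct labels) into $\{p : M'(p) \ge 1\}$, and in fact into $\{p \in D : M'(p) = 1\}$. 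Hence $|A_u| \le \conc{M'} \le \CT{N}$.

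The main obstacle is making precise the passage ``antichain $A_u$ in $\Pi$ $\rightsquigarrow$ reachable marking $M'$ with $\lambda(A_u) \subseteq \{p : M'(p)=1\}$'', i.e.\ exhibiting the prefix $\Pi'$ and arguing its final marking marks all labels of $A_u$. The clean way: let $T'' \defeq \{\, t' \in T' \mid \exists p' \in A_u.\ t' \prec p' \,\}$ be the set of transitions strictly below some place of $A_u$; this $T''$ is downward-closed in $\preceq$ (if $t'' \prec t' \prec p'$ then $t'' \prec p'$), so the restriction of $\Pi$ to $T''$ and the places below or between them forms a genuine prefix $\Pi' \lhd \Pi$ (obtainable by firing the transitions of $T''$ in a causal-order-respecting linearization). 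Every $p' \in A_u$ is causally maximal in $\Pi'$: its only successors in $\Pi$ are transitions, and none of them lies in $T''$ (a transition $t'$ with $p' \prec t'$ cannot be below any place of the antichain $A_u$, by antichain-ness and acyclicity). So $A_u \subseteq \max(\Pi')$, hence $\lambda(A_u) \subseteq \{p : M'(p)\ge 1\}$ where $M'$ is the final marking of $\Pi'$; $M'$ is reachable by the process-theory fact, $1$-safeness gives $M'(p)\le 1$ everywhere and injectivity of $\lambda$ on $\max(\Pi')$, and $A_u \subseteq D$ as noted. Therefore $|A_u| = |\lambda(A_u)| \le \sum_{p \in D} M'(p) = \conc{M'} \le \CT{N}$, completing condition (3) and hence the bound $\RT{\Pi} \le \CT{N}$ for every run, i.e.\ $\RT{N} \le \CT{N}$. \qed
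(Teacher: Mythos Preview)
Your proof is correct. The paper's own proof is a one-line appeal to a greedy-simulation claim (``for every schedule $f$ of a run there is a schedule $g$ with $\CT{N}$ machines such that $g(p)\le f(p)$''), left essentially unjustified; you instead give a direct and fully worked-out argument: take $f_{\min}$, show that the active set $A_u$ at each instant is an antichain of the process, realize that antichain as a subset of the causally maximal places of an explicit prefix $\Pi'\lhd\Pi$ obtained from the downward-closed transition set $T''=\{t'\in T':\exists p'\in A_u.\ t'\prec p'\}$, and conclude via 1-safeness that $|A_u|\le\conc{M'}\le\CT{N}$ for the reachable final marking $M'$ of $\Pi'$. Both arguments ultimately rest on the same fact---the number of simultaneously active positive-duration tasks never exceeds the concurrency of some reachable marking---but you make this explicit through process-prefix machinery, whereas the paper hides it inside the simulation claim. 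Your route is longer but self-contained; the paper's route is shorter but really a sketch that would need exactly your antichain-to-marking step to be made rigorous.
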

\begin{proof}
Follows immediately from the fact that for every schedule $f$ of a run of $N$, there is a schedule $g$ with $\CT{N}$
machines such that $g(p) \leq f(p)$ for every place $p$.
\end{proof}

In the rest of the paper we study the complexity of computing the concurrency threshold.
In~\cite{BVT16}, it was shown that the threshold can be computed in polynomial time for regular workflows,
a class with a very specific structure, and the problem for the general free-choice case was left open.
In Section~\ref{ssec:marked-graphs-ct} we prove that the concurrency threshold of marked graphs can be computed in polynomial time by reduction to a linear programming problem over the rational numbers. In Section~\ref{ssec:free-choice-ct} we study the free-choice case. We show that deciding if the threshold exceeds a given value is NP-complete for acyclic, sound free-choice workflow nets. Further, it can be computed by solving the same linear programming problem as in the case of marked graphs, but over the integers. Finally, we show that in the cyclic case  the problem
remains NP-complete, but the integer linear programming problem does not necessarily yield the correct solution.

\subsection{Concurrency threshold of marked graphs}\label{ssec:marked-graphs-ct}
The concurrency threshold of marked graphs can be computed using a standard technique
based on the \emph{marking equation}~\cite{Murata89}. Given a net $N=(P,T,F)$,
define the \emph{incidence matrix} of $N$ as the $\abs{P} \times \abs{T}$ matrix $\vec{N}$ given by:

\[\vec{N}(p, t) =\left\{
\begin{array}{rl}
1 & \mbox{ if $p \in \postset{t} \setminus \preset{t}$} \\
-1 & \mbox{ if $p \in \preset{t} \setminus \postset{t}$} \\
0 & \mbox{ otherwise}
\end{array} \right.\]

In the following, we denote by $\vec{M}$ the representation
of a marking $M$ as a vector of dimension $\abs{P}$.
Let $N$ be a Petri net, and let $M_1, M_2$ be markings of $N$. The following results are well known from the literature (see e.g.~\cite{Murata89}):
\begin{itemize}
\item If $M_2$ is reachable from $M_1$ in $N$, then $\vec{M_2} = \vec{M_1} + \vec{N} \cdot \vec{X}$ for some integer vector $\vec{X} \ge 0$.
\item If $N$ is a marked graph and $\vec{M_2} = \vec{M_1} + \vec{N} \cdot \vec{X}$ for some \emph{rational} vector $\vec{X} \ge 0$, then $M_2$ is reachable from $M_1$ in $N$.
\item If $N$ is acyclic and $\vec{M_2} = \vec{M_1} + \vec{N} \cdot \vec{X}$ for some \emph{integer} vector $\vec{X} \ge 0$, then $M_2$ is reachable from $M_1$ in $N$.
\end{itemize}
Given a workfkow net $N=(P,T,F,I,O,\tau)$, let 
$\vec{D} \colon P \mapsto \N$ be the vector defined by
$\vec{D}(p) = 1$ if $p \in D$ and $\vec{D}(p) = 0$ if $p \notin D$, where $D$ is the set of places with positive duration. We define the linear optimization problem
\begin{align}\label{eqn:linear-program}
   \ell^N =  \max\set{\vec{D} \cdot \vec{M} \mid \vec{M} = \vec{M_I} + \vec{N} \cdot \vec{X}, \vec{M} \ge 0, \vec{X} \ge 0}
\end{align}
Since the solutions of $\vec{M} = \vec{M_I} + \vec{N} \cdot \vec{X}$ contain all the reachable markings of $(N, M_I)$, we have $\ell^N \geq \CT{N}$. Further, using these results above, we obtain:
\begin{theorem}\label{thm:upperbound}
Let $N$ be a workflow net, and let $\ell^N_\Q$ and $\ell^N_\Z$ be the solution of the linear optimization problem~(\ref{eqn:linear-program}) over the rationals and over the integers, respectively. We have:
\begin{itemize}
\item $\ell^N_\Q \geq \ell^N_\Z \geq \CT{N}$;
\item If $N$ is a marked graph, then $\ell_\Q=\ell_\Z = \CT{N}$.
\item If $N$ is acyclic, then $\ell_\Q \geq \ell_\Z = \CT{N}$.
\end{itemize}
\end{theorem}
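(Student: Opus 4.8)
The plan is to establish the chain $\ell^N_\Q \ge \ell^N_\Z \ge \CT{N}$ first, and then to obtain the two equalities by proving the matching upper bounds: $\ell^N_\Z \le \CT{N}$ in the acyclic case and $\ell^N_\Q \le \CT{N}$ in the marked-graph case. The inequality $\ell^N_\Q \ge \ell^N_\Z$ is immediate, since every integer solution of~(\ref{eqn:linear-program}) is in particular a rational one (and, $\vec{M_I}$ being integral, once $\vec{X}$ is integral so is $\vec{M} = \vec{M_I} + \vec{N}\vec{X}$), so the integer program maximises the same objective over a subset of the rational feasible region. For $\ell^N_\Z \ge \CT{N}$ I would take a reachable marking $M$ with $\conc{M} = \CT{N}$; the first recalled fact about the marking equation yields an integer vector $\vec{X} \ge 0$ with $\vec{M} = \vec{M_I} + \vec{N}\vec{X}$, and $\vec{M} \ge 0$ because $M$ is a marking, so $(\vec{M},\vec{X})$ is feasible for the integer program with value $\vec{D}\cdot\vec{M} = \sum_{p\in D} M(p) = \conc{M} = \CT{N}$.

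For the acyclic case the point is that the third recalled fact is an exact converse of the first over the integers: if $N$ is acyclic and $(\vec{M},\vec{X})$ is an integer feasible solution of~(\ref{eqn:linear-program}), then $\vec{M}$ is an honest marking reachable from $M_I$, whence $\vec{D}\cdot\vec{M} = \conc{M} \le \CT{N}$. Maximising over all integer feasible solutions gives $\ell^N_\Z \le \CT{N}$, so together with the chain above $\ell^N_\Z = \CT{N}$ and $\ell^N_\Q \ge \ell^N_\Z = \CT{N}$. (That $\ell^N_\Q$ can be strictly larger is exactly what the cyclic discussion in the next subsection illustrates; it is not needed here.)

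For the marked-graph case the second recalled fact only certifies reachability of integral markings, so the extra ingredient is an integrality argument for the rational program. I would observe that the incidence matrix $\vec{N}$ of a marked graph is totally unimodular: since $|\preset{p}| \le 1$ and $|\postset{p}| \le 1$, every row of $\vec{N}$ contains at most one $+1$ and at most one $-1$, i.e.\ $\vec{N}$ has exactly the shape of a node--arc incidence matrix of a directed graph (place $p$ playing the role of the arc from its unique input transition to its unique output transition), a textbook totally unimodular matrix. Hence the constraint matrix $[\,I \mid -\vec{N}\,]$ of~(\ref{eqn:linear-program}) is totally unimodular, its right-hand side $\vec{M_I}$ is integral, and the feasible polyhedron is integral; the rational optimum is therefore attained at an integral vertex $(\vec{M}^{\ast},\vec{X}^{\ast})$. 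Then $\vec{M}^{\ast}$ is a marking with $\vec{M}^{\ast} = \vec{M_I} + \vec{N}\vec{X}^{\ast}$ for the (integral, hence rational) vector $\vec{X}^{\ast} \ge 0$, so by the second recalled fact $M^{\ast}$ is reachable from $M_I$, giving $\ell^N_\Q = \vec{D}\cdot\vec{M}^{\ast} = \conc{M^{\ast}} \le \CT{N}$; with $\ell^N_\Q \ge \ell^N_\Z \ge \CT{N}$ this forces $\ell^N_\Q = \ell^N_\Z = \CT{N}$.

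The main obstacle is precisely this last step, i.e.\ closing the gap between the rational relaxation and the concurrency threshold for marked graphs: both non-trivial directions reduce to ``a feasible marking of the program is reachable'', and the only thing blocking a direct appeal to the recalled facts is that, a priori, the rational program could have a non-integral optimum — the total-unimodularity observation is what removes this. One should also note in passing that the programs are bounded, so the optima are genuinely attained; for sound workflow nets this is trivial, since the system is then 1-safe and the objective is at most $|D|$.
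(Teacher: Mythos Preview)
Your proposal is correct and follows the paper's route: the paper does not give a detailed proof but simply states that the theorem is obtained from the three recalled facts about the marking equation. Your argument for the first and third items is exactly the intended one. For the marked-graph item you add a total-unimodularity step that the paper leaves implicit: the second recalled fact only certifies reachability of \emph{integral} markings $M_2$, so one still has to know that the rational optimum of~(\ref{eqn:linear-program}) is attained at an integral $(\vec{M},\vec{X})$. Your observation that the incidence matrix of a marked graph has at most one $+1$ and one $-1$ per row, hence is a network matrix and therefore totally unimodular, is the standard way to close this gap; it is a genuine elaboration rather than a different approach, and it makes explicit what the paper's ``using these results above, we obtain'' glosses over.
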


In particular, it follows that $\CT{N}$ can be computed in polynomial time for marked graphs, acyclic or not. (The result about acyclic nets is used in the next section.)

\subsection{Concurrency threshold of free-choice nets}\label{ssec:free-choice-ct}
We study the complexity of computing the concurrency threshold of 
free-choice workflow nets. We first show that, contrary to numerous other properties for which there
are polynomial algorithms, deciding if the concurrency threshold exceeds a given value is NP-complete.

\begin{restatable}{theorem}{thmNP}\label{thm:NP2}
The following problem is NP-complete:
\begin{quote}
{\bf Given}: A sound, free-choice workflow net $N=(P,T,F,I,O)$, and a number $k \leq |T|$. \\
{\bf Decide}: Is the concurrency threshold of $N$ at least $k$?
\end{quote}
\end{restatable}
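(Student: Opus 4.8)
I would treat the two directions separately, and expect \textsf{NP}-hardness to be the delicate one. For \textbf{membership}, a reachable marking $M$ with $\conc{M}\ge k$ is the natural certificate, so the only issue is to certify reachability succinctly. The plan is to exhibit an occurrence sequence $M_I\trans{\sigma}M$ and to argue it may be taken of polynomial length: a sound free-choice workflow net is $1$-safe, so a shortest such $\sigma$ repeats no marking, and one then bounds its length polynomially using the structural theory of free-choice workflow nets (intuitively, such nets cannot ``count'', so their reachability graph has polynomial diameter). A nondeterministic algorithm then guesses $M$ and $\sigma$ and checks $M_I\trans{\sigma}M$ and $\conc{M}\ge k$ in polynomial time. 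Should the diameter bound be awkward, an alternative is to certify reachability via a solution of the marking equation together with a free-choice-specific, polynomially checkable realizability condition.

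For \textbf{hardness}, the plan is a reduction from a convenient \textsf{NP}-complete problem; my first attempt would be 3-\textsc{Sat}, falling back to \textsc{Independent Set} or monotone 1-in-3-\textsc{Sat} if the gadgets come out cleaner there. Given $\varphi$ with $n$ variables and $m$ clauses, I would build a workflow net $N_\varphi$ with: an initial transition forking into one branch per variable and one per clause; inside each variable branch a single free choice committing the variable to \textbf{true} or \textbf{false}; wiring from the variable branches into the clause branches; and a clause branch that can deposit a token into a distinguished ``witness'' place exactly when the committed assignment satisfies that clause; all branches are finally drained into one output place. Letting $D$ consist only of the $m$ witness places and setting $k=m\le|T|$, the claim is that $\varphi$ is satisfiable iff some reachable marking of $N_\varphi$ marks all $m$ witness places at once, i.e.\ iff $\CT{N_\varphi}\ge k$. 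The forward direction is immediate (fire the forks, make the satisfying choices, route the witness tokens, stop); the backward direction holds because every reachable marking records a single consistent assignment, so no marking marks more witness places than the number of clauses satisfiable by one assignment. One then checks that $N_\varphi$ is a workflow net (unique source/sink; every node on a source-to-sink path), is free-choice, and is \emph{sound}.

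The \textbf{main obstacle} is designing the clause gadgets so that $N_\varphi$ is \emph{simultaneously} free-choice and sound. Free-choiceness forces every maximal set of places sharing a common postset to be a clean cluster, and so rules out the obvious gadget that would ``observe'' whether a disjunction of conditions holds: all input places of a transition must have the same postset, hence a transition cannot be conditionally enabled by the mere presence of a token elsewhere. The clause gadget must therefore be assembled only from forks, per-place choices, joins, and merges of \emph{mutually exclusive} tokens, and it must be arranged so that no resolution of the choices ever leaves the net stuck --- otherwise soundness (which demands that $M_O$ be reachable from every reachable marking and that every transition be fireable) fails. Getting soundness together with the free-choice property is precisely what makes the theorem notable, since for sound free-choice nets most analysis problems are tractable; I therefore expect this gadget construction and its soundness proof to be the crux, with the remaining verifications (free-choice property, workflow-net conditions, and the two directions of the reduction) comparatively routine.
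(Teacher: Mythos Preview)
Your membership argument is essentially the paper's: the witnessing marking is reached by a firing sequence of polynomial length, guaranteed by the Shortest Sequence Theorem for live and bounded free-choice systems (applied to the net closed with a fresh transition from $O$ to $I$), which gives a bound of $|T|(|T|+1)(|T|+2)/6$. Your remark that ``a shortest $\sigma$ repeats no marking'' is true but only yields an exponential bound, so the appeal to the free-choice structure theory is indeed where the work is.

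For hardness, your primary plan via 3-\textsc{Sat} has a real gap that you yourself flag but do not close: a clause gadget must recognise a \emph{disjunction} of three literal-tokens, and in a free-choice net any transition with several input places forces those places to share a postset, so the gadget cannot passively observe which literal is available. The usual fix---let the clause make its own free choice of which literal to use---breaks soundness whenever the chosen literal turns out false, unless you add an escape path, and then the witness place can be reached without the clause being satisfied, collapsing the reduction. I do not see a clean way around this with \textsc{Sat}-style clauses.

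The paper takes your stated fallback, \textsc{Independent Set}, precisely because it avoids disjunctions. For each edge $e=\{u,v\}$ there is a gadget with a single binary free choice: firing $[e,v]^1$ marks a weight-$2$ place $[e,v]^2$ \emph{and} immediately releases the place $[e,u]^4$ needed by the vertex transition $u^1$; symmetrically for $[e,u]^1$. A vertex transition $v^1$ has as inputs all places $[e,v]^4$ over edges $e$ incident to $v$, so it can fire only if every incident edge chose the \emph{other} endpoint---exactly the independence condition. Soundness is immediate (either branch of each edge gadget eventually marks both $[e,v]^4$ and $[e,u]^4$), free-choiceness is by construction, and the concurrency threshold is $2|E|$ plus the maximum number of vertex places $v^2$ simultaneously markable, i.e.\ the maximum independent set size. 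If you pursue hardness, switch directly to this reduction.
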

\begin{proof}
A detailed proof can be found in the appendix, here we only sketch the argument. 
Membership in NP is nontrivial, and follows from results of~\cite{Aalst97,DeselEsparza95}.
We prove NP-hardness by means of a reduction from Maximum Independent Set (MIS):
\begin{quote}
{\bf Given}: An undirected graph $G=(V, E)$, and a number $k \leq |V|$. \\
{\bf Decide}: Is there a set $\In \subseteq V$ such that $\abs{\In} \geq k$ and $\set{v,u} \notin E$ for every $u,v \in \In{}$?
\end{quote}

Given a graph $G=(V,E)$, we construct a sound free-choice workflow net $N_G$ in polynomial time as follows:
\begin{itemize}
    \item For each $e=\{v, u\} \in E$ we add to $N_G$ the ``gadget net'' $N_e$ shown in Fig.~\ref{fig:gadget-ne}, and for every node $v$ we add the gadget net $N_v$ shown in Fig.~\ref{fig:gadget-nv}.
    \item For every $e=\{v, u\} \in E$, we add an arc from the place $[e,v]^4$ of $N_e$ to the  transition $v^1$ of $N_v$, and from $[e,u]^4$ to the transition $u^1$ of $N_u$.
    \item The set $I$ of initial places contains the place $e^0$ of $N_e$ for every edge $e$; the set $O$ of output places contains the places $v^2$ of the nets $N_v$.
\end{itemize}

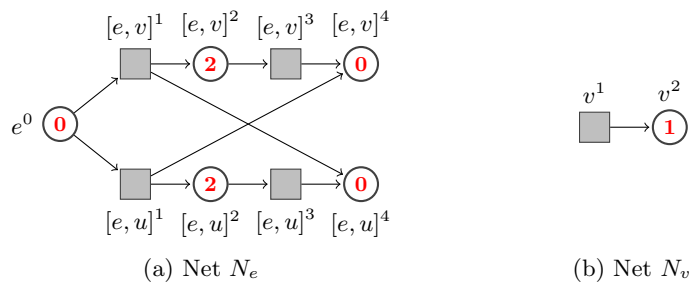
\begin{figure}
\centering


\subfloat[Net $N_e$]{\
    \begin{tikzpicture}[scale=1.0,every node/.style={scale=1.0}]
        \node [minimum width=1ex, minimum height=23ex] at (0,0) {};

        \node [place,label={left:$e^0$}] (e0) at (0,0) {\weight{0}};
        \node [place,label={above:$[e,v]^2$}] (ev2) at (2,0.8) {\weight{2}};
        \node [place,label={below:$[e,u]^2$}] (eu2) at (2,-0.8) {\weight{2}};
        \node [place,label={above:$[e,v]^4$}] (ev4) at (4,0.8) {\weight{0}};
        \node [place,label={below:$[e,u]^4$}] (eu4) at (4,-0.8) {\weight{0}};

        \node [transition,label={above:$[e,v]^1$}] (ev1) at (1,0.8) {}
          edge [pre]  (e0)
          edge [post] (ev2)
          edge [post] (eu4);
        \node [transition,label={below:$[e,u]^1$}] (eu1) at (1,-0.8) {}
          edge [pre]  (e0)
          edge [post] (eu2)
          edge [post] (ev4);
        \node [transition,label={above:$[e,v]^3$}] (ev3) at (3,0.8) {}
          edge [pre]  (ev2)
          edge [post] (ev4);
        \node [transition,label={below:$[e,u]^3$}] (eu3) at (3,-0.8) {}
          edge [pre]  (eu2)
          edge [post] (eu4);
    \end{tikzpicture}\label{fig:gadget-ne}
}\hspace{2cm}
\subfloat[Net $N_v$]{\
    \begin{tikzpicture}[scale=1.0,every node/.style={scale=1.0}]
        \node [minimum width=1ex, minimum height=23ex] at (6,0) {};

       \node [place,label={above:$v^2$}] (v2) at (6.5,0) {\weight{1}}; 
        \node [transition,label={above:$v^1$}] (v1) at (5.5,0) {}
          edge [post] (v2);
    \end{tikzpicture}\label{fig:gadget-nv}
}

\caption{Gadgets for the proof of Theorem~\ref{thm:NP2}.}\label{fig:gadgets}
\end{figure}

It is easy to see that $N_G$ is free-choice and sound, and in the Appendix we
show the result of applying the reduction to a small graph and
prove that $G$ has an independent set of size at least $k$ if{}f the concurrency threshold of $(N_G,M_I)$ is at least $2|E|+k$.
The intuition is that for each edge $e \in E$, we fire the transition $[e,u]^1$ where $u \notin \In{}$,
and for each $v \in \In{}$, we fire the transition $v^1$, thus marking one of $[e,u]^2$ or
$[e,v]^2$ for each edge $e \in E$ and the place $v^2$ for each $v \in \In{}$.
\end{proof}

\subsection{Approximating the concurrency threshold}
Recall that the solution of problem~(\ref{eqn:linear-program}) over the rationals or the integers is always an upper bound on the concurrency threshold for any Petri net (Theorem~\ref{thm:upperbound}). The question is whether any stronger result holds when the workflows are sound and free-choice. Since computing the concurrency threshold is NP-complete, we cannot expect the solution over the rationals, which is computable in polynomial time, to provide the exact value. However, it could still be the case that the solution over the integers is always exact. Unfortunately, this is not true, and we can prove the following results:

\begin{theorem}
Given a Petri net $N$, let $\ell^N_\Q$ and $\ell^N_\Z$ be as in Theorem~\ref{thm:upperbound}.
\begin{itemize}
\item[(a)] There is an acyclic sound free-choice workflow net $N$ such that $\CT{N}  < \ell^N_\Q$.
\item[(b)] There is a sound free-choice workflow net $N$ such that and let $\CT{N} < \ell^N_\Z$.
\end{itemize}
\end{theorem}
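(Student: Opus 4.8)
The plan is to construct two small explicit counterexamples, one acyclic and one cyclic, for parts (a) and (b) respectively. Both constructions follow the same recipe: build a sound free-choice workflow net whose ``obvious'' fractional (resp.\ integer) solution to the marking equation~(\ref{eqn:linear-program}) exploits a combination of transition firings that is not realizable by any genuine occurrence sequence, because the free choices involved are mutually exclusive.

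For part (a), first I would design a gadget with two transitions $t$ and $t'$ sharing a common input place (hence in conflict, since the net is free-choice). Firing $t$ deposits tokens in a set of duration-carrying places whose total duration-count is $a$; firing $t'$ deposits tokens giving count $b$, with $a,b$ chosen so that any reachable marking has concurrency at most $\max(a,b)$, but the marking equation admits a \emph{rational} solution that fires $\tfrac12 t + \tfrac12 t'$ (and compensating fractional amounts of downstream transitions), yielding objective value $\tfrac12(a+b) > \max(a,b)$ — which forces $a \ne b$ and the half-split to beat both pure options; concretely one wants the fractional combination to ``collect'' duration-tokens from both branches simultaneously. The second step is to embed this gadget into a full workflow net with a single input place $i$ and output place $o$, routing the leftover tokens so that soundness holds: every transition must be coverable, the net must be acyclic, and from $M_I$ one must always be able to reach $M_O$. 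Then I would (i) verify freeness and soundness by inspection of the small net, (ii) compute $\ell^N_\Q$ by exhibiting the fractional firing vector and arguing optimality via the structure of the incidence matrix (or just solving the tiny LP), and (iii) bound $\CT{N}$ by enumerating the finitely many reachable markings — feasible since the net is acyclic and small — to show $\CT{N} < \ell^N_\Q$.

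For part (b), the obstruction must survive integrality, so rationality alone is not enough — I would instead exploit cyclicity. The idea is a net with a cycle (a ``loop'') such that an integer firing vector $\vec{X} \ge 0$ can fire a loop transition together with an exit transition whose combined net effect, as computed by the incidence matrix, deposits tokens in duration-places without ``paying'' the token-flow cost that the actual semantics would demand: the integer marking equation cannot detect that the exit was taken before the loop body completed, or that two conflicting branches of the loop were taken on the ``same'' token. So I would build a free-choice loop with an internal conflict, choose durations so that the spurious integer solution $\vec{M} = \vec{M_I} + \vec{N}\cdot\vec{X}$ has $\vec{D}\cdot\vec{M}$ strictly above every genuinely reachable concurrency value, then verify soundness (liveness of the loop plus proper termination) and, since the net is now cyclic and the reachable set possibly infinite, bound $\CT{N}$ by a structural invariant — e.g.\ a place-invariant / S-invariant of the free-choice net that caps the number of simultaneously marked duration-places — rather than by enumeration.

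The main obstacle I expect is the combination of three competing requirements in each gadget: (1) the relaxed (rational, resp.\ integer) marking equation must genuinely have an optimum strictly above $\CT{N}$ — this needs the spurious firing vector to be LP-optimal, not merely feasible, which constrains the incidence matrix; (2) the net must remain \emph{sound} (no deadlocks, proper completion, full coverability), which is delicate precisely because we are trying to make the marking equation ``lie'' about reachability; and (3) for part~(b), certifying $\CT{N}$ from above in the presence of a cycle requires finding the right invariant. Getting all three simultaneously, while keeping the net small enough to check soundness and the true concurrency threshold by hand, is the real work; the individual LP and invariant computations afterwards are routine.
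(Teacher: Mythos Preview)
Your overall plan---exhibit two small explicit sound free-choice workflow nets, an acyclic one for~(a) and a cyclic one for~(b), and verify the gap by direct computation---is exactly what the paper does. Two points in the proposal need correction, though.

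For~(a), the inequality $\tfrac12(a+b) > \max(a,b)$ you aim for is impossible: the average of two numbers never exceeds their maximum, so a gadget in which the two conflicting transitions simply deposit duration totals $a$ and $b$ in disjoint places can never have its half--half mixture beat the better pure branch. The mechanism the paper uses is different. In the gadget $N_e$ of Fig.~\ref{fig:gadget-ne} the two conflicting transitions have \emph{crossing} output arcs: $[e,v]^1$ marks $[e,u]^4$ and $[e,u]^1$ marks $[e,v]^4$. Firing each at level $\tfrac12$ therefore puts half a token on \emph{both} $[e,\cdot]^4$ places, which half-enables an added join transition into a fresh output place $o$ of weight~$2$. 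The resulting fractional marking carries $\tfrac12$ on each of the three weight-$2$ places $[e,v]^2$, $[e,u]^2$, $o$, giving objective~$3$, whereas every reachable marking has at most one of them marked and concurrency~$2$. Your closing phrase about ``collecting duration-tokens from both branches simultaneously'' is the right intuition, but without the cross-wiring and the downstream place it cannot be realised; a plain two-branch conflict will not separate $\ell^N_\Q$ from $\CT{N}$.

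For~(b), your idea---a loop whose body contains the same crossed conflict, so that an integer vector firing \emph{both} conflicting transitions once together with the loop-closing transition satisfies the marking equation (the loop ``refunds'' the borrowed token) yet corresponds to no occurrence sequence---is precisely the paper's construction (Fig.~\ref{fig:notexact}), yielding $\CT{N}=1$ but $\ell^N_\Z=2$. One small correction: a sound free-choice workflow net is always $1$-safe, so its reachable set is finite even in the cyclic case; enumeration therefore suffices to certify $\CT{N}$, and the paper simply inspects the net. Your S-invariant route would also work, but the stated motivation ``the reachable set possibly infinite'' is incorrect here.
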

\begin{proof}
For (a), we can take the net obtained by adding to the gadget in Fig.~\ref{fig:gadget-ne} a new transition with input places $[e,v]^4$ and $[e,u]^4$, and an output place $o$ with weight $2$. We take $e^0$ as input place. The concurrency threshold is clearly $2$, reached, for example, after firing $[e,v]^1$. However, we have $\ell^N_\Q = 3$, reached by the rational solution $\vec{X} = (1/2, 1/2, \ldots, 1/2)$. Indeed, the marking equation then yields the marking $M$ satisfying $M([e,v]^2) = M([e,u]^2)=M(o)=1/2$.

For (b), we can take the workflow net of 
Fig.~\ref{fig:notexact}. It is easy to see that the concurrency threshold is equal to $1$. The marking $\vec{M}$ that puts one token in each of the 
two places with weight $1$, and no token in the rest of the places, is not reachable from $M_I$.
However, it is a solution of the marking equation, even when solved over the integers. Indeed, we
have $\vec{M} = \vec{M_I} + \vec{N} \cdot \vec{X}$ for $\vec{X}=(1,0,1,1,0,0,1)$. Therefore, the upper bound derived from the marking equation is $2$. 
\end{proof}

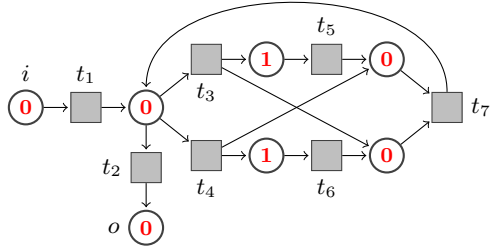
\begin{figure}
\centering

\begin{tikzpicture}[scale=0.8,every node/.style={scale=1.0}]

        \node [place,label={above:$i$}] (i) at (-2,0) {\weight{0}};
        \node [place] (e0) at (0,0) {\weight{0}};
        \node [place] (ev2) at (2,0.8) {\weight{1}};
        \node [place] (eu2) at (2,-0.8) {\weight{1}};
        \node [place] (ev4) at (4,0.8) {\weight{0}};
        \node [place] (eu4) at (4,-0.8) {\weight{0}};
        \node [place, label={left:$o$}] (o) at (0,-2) {\weight{0}};

        \node [transition,label={above:$t_1$}] (t1) at (-1,0) {}
         edge [pre]  (i)
         edge [post] (e0);
         \node [transition,label={left:$t_2$}] (t2) at (0,-1) {}
         edge [pre]  (e0)
         edge [post] (o);
        \node [transition,label={below:$t_3$}] (ev1) at (1,0.8) {}
          edge [pre]  (e0)
          edge [post] (ev2)
          edge [post] (eu4);
        \node [transition,label={below:$t_4$}] (eu1) at (1,-0.8) {}
          edge [pre]  (e0)
          edge [post] (eu2)
          edge [post] (ev4);
        \node [transition,label={above:$t_5$}] (ev3) at (3,0.8) {}
          edge [pre]  (ev2)
          edge [post] (ev4);
        \node [transition, label={below:$t_6$}] (eu3) at (3,-0.8) {}
          edge [pre]  (eu2)
          edge [post] (eu4);
        \node [transition, label={right:$t_7$}] (eo) at (5,0) {}
          edge [pre]  (eu4) edge [pre]  (ev4)
          edge [post,bend right=90] (e0);

\end{tikzpicture}
\caption{A sound free-choice workflow net for which the linear programming problem derived from the marking equation does not yield the exact value of the concurrency bound, even when solved over the integers.}\label{fig:notexact}
\end{figure}

\section{Concurrency threshold: a practical approach}\label{sec:experiments}
We have implemented a tool\footnote{The tool is available from \url{https://gitlab.lrz.de/i7/macaw}.}
to compute an upper bound on the concurrency threshold by constructing a linear program
and solving it by calling the mixed-integer linear programming solver Cbc from the COIN-OR project~\cite{Lougee-Heimer03}.
Additionally, fixing a number $k$, we used the
state-of-the art Petri net model checker LoLA~\cite{Wolf07} to both establish a lower bound,
by querying LoLA for existence of a reachable marking $M$ with $\conc{M} \ge k$;
and to establish an upper bound, by querying LoLA if all reachable markings $M'$ satisfy
$\conc{M'} \le k$.

We evaluated the tool on a set of 1386 workflow nets
extracted from a collection of five libraries of industrial business processes
modeled in the IBM WebSphere Business Modeler~\cite{FFJKLVW09}.
For the concurrency threshold, we set $D = P \setminus O$.
These nets also have multiple output places, however with a slightly
different semantics for soundness allowing unmarked output places in
the final marking. We applied the transformation described in~\cite{KHA03}
to ensure all output places will be marked in the final marking.
This transformation preserves soundness and the concurrency threshold.

All of the 1386 nets in the benchmark libraries are free-choice nets. We selected the sound
nets among them, which are 642. Out of those 642 nets, 409 are marked graphs. Out of
the remaining 233 nets, 193 are acyclic and 40 cyclic.
We determined the exact concurrency threshold of all sound nets
with LoLA using state-space exploration.
Fig.~\ref{fig:benchmarks-concurrency} shows the distribution of the threshold.

\begin{figure}
    \input{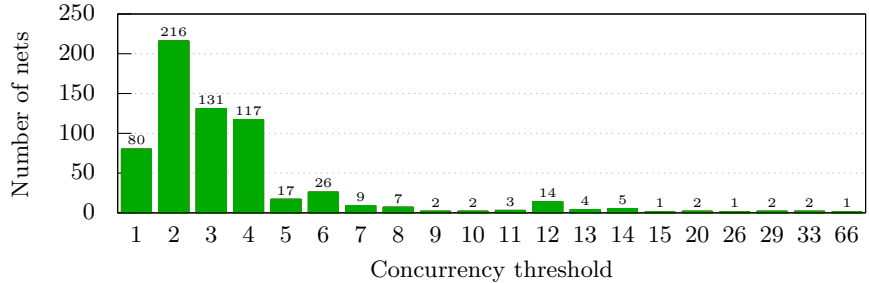}
    \caption{Distribution of the concurrency threshold of the 642 nets analyzed.}\label{fig:benchmarks-concurrency}
\end{figure}

On all 642 sound nets, we computed an upper bound on
the concurrency threshold using our tool,
both using rational and integer variables.
We computed lower and upper bounds using LoLA with the value $k = \CT{N}$
of the concurrency threshold. We report the results for computing the lower and upper bound separately.

All experiments were performed on the same machine equipped with
an Intel Core i7-6700K CPU and 32\,GB of RAM\@.
The results are shown in Table~\ref{tab:benchmarks-results}.
Using the linear program, we were able to compute an upper bound for all nets
in total in less than 7 seconds, taking at most 30 milliseconds
for any single net.
LoLA could compute the lower bound for all nets in 6 seconds.
LoLA fails to compute the upper bound in three cases
due to reaching the memory limit of 32\,GB\@.
For the remaining 639 nets, LoLA could compute the upper bound within
7 minutes in total.

We give a detailed analysis for the 9 nets with a state space of over one million.
For three nets with state space of sizes $10^9$, $10^{10}$ and $10^{17}$,
LoLa reaches the memory limit.
For four nets with state spaces between $10^6$ and $10^8$ and concurrency
threshold above 25, LoLA takes
2, 10, 48 and 308 seconds each.
For two nets with a state space of $10^8$ and a concurrency threshold of just 11,
LoLA can establish the upper bound in at most 20 milliseconds.
The solution of the linear program can be computed in all 9 cases in less than
30 milliseconds.

\begin{table}
    \setlength{\tabcolsep}{3pt}
    \begin{center}
    \begin{tabular}[t]{lrrrrrrrrr}
        \toprule
        & \multicolumn{3}{c}{Net size} & & \multicolumn{4}{c}{Analysis time (sec)} \\
        \cmidrule(r){2-4}
        \cmidrule(r){6-9}
        & $\abs{P}$ & $\abs{T}$ & $\abs{\reachwn[N]}$ & $\CT{N}$ & $\ell^N_\Q$ &  $\ell^N_\Z$ & $\CT{N} \ge k$ & $\CT{N} \le k$ & \\
        \midrule
        Median &  21\phantom{.0}   &  14\phantom{.0}  &  16\phantom{$\ \,\cdot10^{00}$}              &
                   3\phantom{.0}   &   0.01           &   0.01 & 0.01    &   0.01\phantom{\textsuperscript{*}}\\
        Mean   &  28.4             &  18.6            &  $3\cdot10^{14}$ &
                   3.7             &   0.01           &   0.01 & 0.01    &   0.58\textsuperscript{*} \\
        Max    & 262\phantom{.0}   & 284\phantom{.0}  &  $2\cdot10^{17}$ &
                  66\phantom{.0}   &   0.03           &   0.03 & 1.18    & 307.76\textsuperscript{*}  \\
        \bottomrule
    \end{tabular}
    \end{center}
    \caption{Statistics on the size and analyis time for the 642 nets analyzed.
    The times marked with \textsuperscript{*} exclude the 3 nets where LoLA reaches the memory limit.}\label{tab:benchmarks-results}
\end{table}

Comparing the values of the upper bound, first
we observed that we obtained the same value using either rational or integer variables.
The time difference between both was however negligible.
Second, quite surprisingly, we noticed that the upper bound obtained from the linear program
is exact in all of our cases, even for the cyclic ones.
Further, it can be computed much faster in several cases than the upper bound
obtained by LoLA and it gives a bound in all cases,
even when the state-space exploration reaches its limit.
By combining linear programming for the upper bound and state-space exploration for
the lower bound, an exact bound can always be computed within a few seconds.

\section{Conclusion}\label{sec:conclusion}
Planning sufficient execution resources for a business or production process is a crucial part of process engineering~\cite{LiuZLQ14,BessaiYOGN12,XuLZ08}. We considered a simple version of this problem in which resources are uniform and tasks are not interruptible. We studied the complexity of computing the resource threshold, i.e.,~the minimal number of resources
allowing an optimal makespan.
We showed that deciding if the resource threshold exceeds a given bound is NP-hard even for acyclic marked graphs. For this reason, we investigated the complexity of computing the concurrency threshold, an upper bound of the resource threshold introduced in~\cite{BVT16}. Solving a problem left open in~\cite{BVT16}, we showed that
deciding if the concurrency threshold exceeds a given bound is NP-hard for general sound free-choice workflow nets.
We then presented a polynomial-time approximation algorithm, and showed experimentally that it computes the \emph{exact} value of the concurrency threshold for all benchmarks of a standard suite of free-choice workflow nets.

\bibliographystyle{splncs03}
\bibliography{references}

\clearpage
\appendix

\section*{Appendix}
\label{sec:appendix}
\thmNPMG*

\begin{proof}
Since $N$ is acyclic, every transition can occur at most once in a firing sequence. Further, since $N$ is a sound marked graph, it has exactly one run, which is isomorphic to $N$ (This follows immediately from the fact that nonsequential processes are  sound marked graphs themselves). Therefore, in order to decide if $\RT{N} \leq k$ holds in 
nondeterministic polynomial time, we just guess the starting time $f(p)$ of every place $p$ of $N$, and check that
it satisfies conditions (1)-(3) in the definition of resource threshold.

To prove NP-hardness, let $J_1, \ldots, J_n$, $\tau$, $t>0$, and $k$ be an instance of the scheduling problem above. We construct an acyclic  workflow marked graph $N=(P,T,F, \tau')$ and a number $k'$ such that $\RT{N} \leq k'$ if{}f the jobs can be executed in time $t$ with $k$ machines. The procedure for the construction of $N$ is as follows:
\begin{itemize}
\item Add input and output places $p_I, p_O$, transitions $t_I, t_O$, arcs $(p_I, t_I), (t_O, p_O)$.
\item For every job $J_i$ minimal w.r.t. $\preceq$: add a place $p_i$, a transition $t_i'$, and arcs  $(t_I, p_i), (p_i, t_i')$.
\item For every job $J_i$ maximal w.r.t. $\preceq$: add a place $p_i$, a transition $t_i$, and an arc  $(t_i, p_i), (p_i, t_O)$.
\item For every other job $J_i$ add a place $p_i$, transitions $t_i, t_i'$, and arcs $(t_i, p_i), (p_i, t_i')$.
\item For every pair $J_i \preceq J_j$, add a place $p_{i, j}$ and arcs $(t_i', p_{i,j}), (p_{i,j}, t_j)$.
\item Set $\tau'(p_i) = \tau(J_i)$ for every place $p_i$ and $\tau'(p') = 0$ for every other place $p'$.
\end{itemize}
Let $N$ be the net constructed so far. Now test if $t_{\min}(N) \le t$.
If $t_{\min}(N) > t$, then there is no execution of $J_1, \ldots, J_n$ in time $t$ with any amount of machines, so we force
a negative answer (e.g.~by setting $k' = 0$).
Otherwise, we add a place $p_t$ and two arcs $(t_I, p_t), (p_t, t_O)$ and set $\tau'(p_t)= t$.
Then $t_{\min}(N) = t$ and we set $k'=k+1$.
By the definition of $N$, we have that $J_1, \ldots, J_n$ can be executed in time $t$ with
$k$ machines if{}f $N$ can be executed in time $t$ with $k' = k+1$ resources.
\end{proof}

\thmNP*

\begin{proof}
Let $N=(P,T,F,I,O)$ be a sound free-choice workflow net. It is well known that the system
$(\overline{N}, M_I)$, where $\overline{N}$ is the result of adding to $N$ a new transition
$\overline{t}$ with preset $O$ and postset $I$ is live and bounded (This is shown in~\cite{Aalst97} for workflow nets with one input and one output place, and the same proof can be used in the general case as well.) In particular, every firing sequence $\sigma \overline{t}$ of $(\overline{N}, M_I)$ leads back to the initial marking $M_I$, since otherwise it would lead to a marking $M' \geq M_I$ such that $M'\neq M_I$, and so $(\overline{N}, M_I)$ would not be bounded. It follows that every reachable marking of $(\overline{N}, M_I)$ is reachable by means of a firing sequence that does not contain $\overline{t}$, and so in particular $(N, M_I)$ and $(\overline{N}, M_I)$ have the same concurrency threshold.

Let $M_{\max}$ be a marking witnessing the concurrency threshold of $(\overline{N}, M_I)$.
By the Shortest Sequence Theorem~\cite{DeselEsparza95}, there exists a firing sequence $M_I \trans{\sigma} M_{\max}$ such that $|\sigma| \leq n (n+1)(n+2)/6$, where $n=|T|$. Therefore, the  nondeterministic algorithm that
guesses a firing sequence of length at most $n (n+1)(n+2)/6$ steps, and halts if the current marking has at least $k$ tokens in the places of $D$, runs in polynomial time.

We prove NP-hardness by means of a reduction from Maximum Independent Set (MIS):
\begin{quote}
{\bf Given}: An undirected graph $G=(V, E)$, and a number $k \leq |V|$. \\
{\bf Decide}: Is there a set $\In \subseteq V$ such that $\abs{\In} \geq k$ and $\set{v,u} \notin E$ for every $u,v \in \In{}$?
\end{quote}
We illustrate the reduction on the instance of MIS shown in Fig.~\ref{fig:red-graph}, where $\{v_1, v_4, v_5\}$ is a maximum independent set.
Fig.~\ref{fig:red-net} shows the result of the reduction.
Here we represent the set $D$ by weights of places as for the equation~(\ref{eqn:linear-program}).
We use the weight $2$ for a place $p$ as a compact representation of two places with weight 1 each
and the same pre- and postset as $p$.

Given a graph $G=(V,E)$, we construct a sound free-choice workflow net $N_G$ in polynomial time as follows:
\begin{itemize}
    \item For each $e=\{v, u\} \in E$ we add to $N_G$ the ``gadget net'' $N_e$ shown in Fig.~\ref{fig:gadget-ne2}, and for every node $v$ we add the gadget net $N_v$ shown in Fig.~\ref{fig:gadget-nv2}.
    \item For every $e=\{v, u\} \in E$, we add an arc from the place $[e,v]^4$ of $N_e$ to the  transition $v^1$ of $N_v$, and from $[e,u]^4$ to the transition $u^1$ of $N_u$ (see Fig.~\ref{fig:red-net}).
    \item The set $I$ of initial places contains the place $e^0$ of $N_e$ for every edge $e$; the set $O$ of output places contains the places $v^2$ of the nets $N_v$.
\end{itemize}

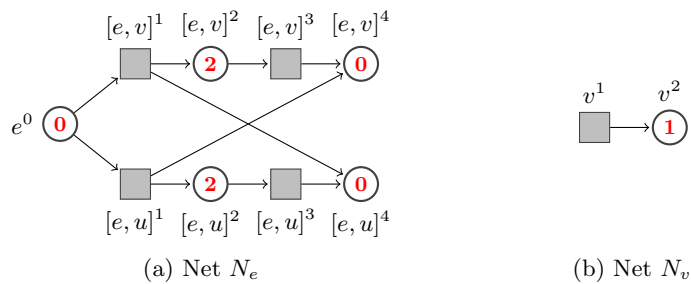
\begin{figure}
\centering


\subfloat[Net $N_e$]{\
    \begin{tikzpicture}[scale=1.0,every node/.style={scale=1.0}]
        \node [minimum width=1ex, minimum height=23ex] at (0,0) {};

        \node [place,label={left:$e^0$}] (e0) at (0,0) {\weight{0}};
        \node [place,label={above:$[e,v]^2$}] (ev2) at (2,0.8) {\weight{2}};
        \node [place,label={below:$[e,u]^2$}] (eu2) at (2,-0.8) {\weight{2}};
        \node [place,label={above:$[e,v]^4$}] (ev4) at (4,0.8) {\weight{0}};
        \node [place,label={below:$[e,u]^4$}] (eu4) at (4,-0.8) {\weight{0}};

        \node [transition,label={above:$[e,v]^1$}] (ev1) at (1,0.8) {}
          edge [pre]  (e0)
          edge [post] (ev2)
          edge [post] (eu4);
        \node [transition,label={below:$[e,u]^1$}] (eu1) at (1,-0.8) {}
          edge [pre]  (e0)
          edge [post] (eu2)
          edge [post] (ev4);
        \node [transition,label={above:$[e,v]^3$}] (ev3) at (3,0.8) {}
          edge [pre]  (ev2)
          edge [post] (ev4);
        \node [transition,label={below:$[e,u]^3$}] (eu3) at (3,-0.8) {}
          edge [pre]  (eu2)
          edge [post] (eu4);
    \end{tikzpicture}\label{fig:gadget-ne2}
}\hspace{2cm}
\subfloat[Net $N_v$]{\
    \begin{tikzpicture}[scale=1.0,every node/.style={scale=1.0}]
        \node [minimum width=1ex, minimum height=23ex] at (6,0) {};

       \node [place,label={above:$v^2$}] (v2) at (6.5,0) {\weight{1}}; 
        \node [transition,label={above:$v^1$}] (v1) at (5.5,0) {}
          edge [post] (v2);
    \end{tikzpicture}\label{fig:gadget-nv2}
}

\caption{Gadgets for the proof of Theorem~\ref{thm:NP2}.}\label{fig:gadgets2}
\end{figure}

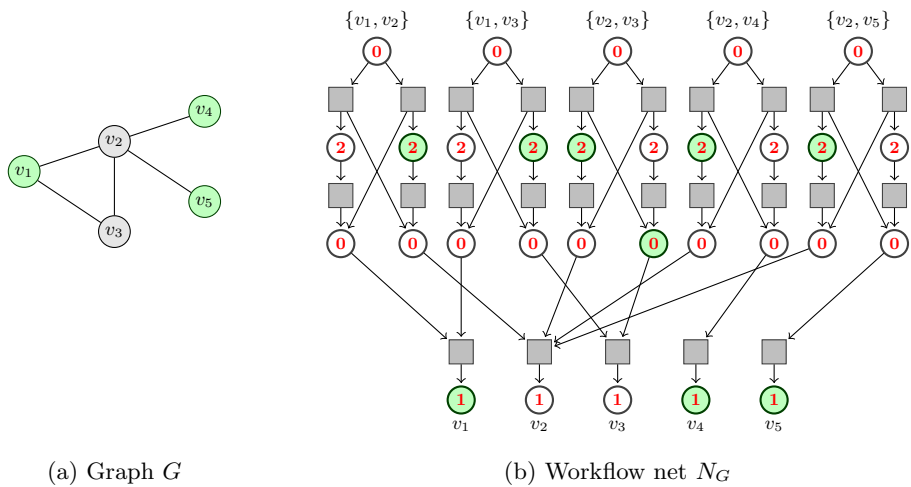
\begin{figure}
\centering


\subfloat[Graph $G$]{\
    \begin{tikzpicture}[scale=0.8,every node/.style={scale=0.8}]
        \node [minimum width=1ex, minimum height=55ex] at (-4,-2.85) {};

    \tikzstyle{vertex}=[circle, draw, fill=gray!20, inner sep=0pt, minimum size=15pt]
    \tikzstyle{weight} = [font=\small]
    \tikzstyle{edge} = [draw,thick,-]

    \node[vertex,marked] (v_1) at (-6,-2)     {$v_1$};
    \node[vertex] (v_2) at (-4.5,-1.5) {$v_2$};
    \node[vertex] (v_3) at (-4.5,-3)   {$v_3$};
    \node[vertex,marked] (v_4) at (-3,-1)     {$v_4$};
    \node[vertex,marked] (v_5) at (-3,-2.5)   {$v_5$};

    \foreach \source/ \dest /\weight in {v_1/v_2/e_1, v_1/v_3/e_2,  v_2/v_4/e_3, v_2/v_5/e_4}
        \draw (\source) -- node[weight,above] {} (\dest);

        \draw (v_2) -- node[font=\small,right] {} (v_3);

    \end{tikzpicture}\label{fig:red-graph}
}\hfill
\subfloat[Workflow net $N_G$]{\
    \begin{tikzpicture}[scale=0.8,every node/.style={scale=0.8}]
        \node [minimum width=1ex, minimum height=55ex] at (0,-2.85){};

\def\st{0}
    \node [place,label=above:{$\set{v_1, v_2}$}] (e10) at (0+\st,0) {\weight{0}};
        \node [place,marked] (e1v2) at (0.6+\st,-1.6) {\weight{2}};
        \node [place] (e1u2) at (-0.6+\st,-1.6) {\weight{2}};
        \node [place] (e1v4) at (0.6+\st,-3.2) {\weight{0}};
        \node [place] (e1u4) at (-0.6+\st,-3.2) {\weight{0}};

        \node [transition] (e1v1) at (0.6+\st,-0.8) {}
          edge [pre]  (e10)
          edge [post] (e1v2)
          edge [post] (e1u4);
        \node [transition] (e1u1) at (-0.6+\st,-0.8) {}
          edge [pre]  (e10)
          edge [post] (e1u2)
          edge [post] (e1v4);
        \node [transition] (e1v3) at (0.6+\st,-2.4) {}
          edge [pre]  (e1v2)
          edge [post] (e1v4);
        \node [transition] (e1u3) at (-0.6+\st,-2.4) {}
          edge [pre]  (e1u2)
          edge [post] (e1u4);

\def\st{2}
    \node [place,label=above:{$\set{v_1, v_3}$}] (e20) at (0+\st,0) {\weight{0}};
        \node [place,marked] (e2v2) at (0.6+\st,-1.6) {\weight{2}};
        \node [place] (e2u2) at (-0.6+\st,-1.6) {\weight{2}};
        \node [place] (e2v4) at (0.6+\st,-3.2) {\weight{0}};
        \node [place] (e2u4) at (-0.6+\st,-3.2) {\weight{0}};

        \node [transition] (e2v1) at (0.6+\st,-0.8) {}
          edge [pre]  (e20)
          edge [post] (e2v2)
          edge [post] (e2u4);
        \node [transition] (e2u1) at (-0.6+\st,-0.8) {}
          edge [pre]  (e20)
          edge [post] (e2u2)
          edge [post] (e2v4);
        \node [transition] (e2v3) at (0.6+\st,-2.4) {}
          edge [pre]  (e2v2)
          edge [post] (e2v4);
        \node [transition] (e2u3) at (-0.6+\st,-2.4) {}
          edge [pre]  (e2u2)
          edge [post] (e2u4);

\def\st{4}
    \node [place,label=above:{$\set{v_2, v_3}$}] (e30) at (0+\st,0) {\weight{0}};
        \node [place] (e3v2) at (0.6+\st,-1.6) {\weight{2}};
        \node [place,marked] (e3u2) at (-0.6+\st,-1.6) {\weight{2}};
        \node [place,marked] (e3v4) at (0.6+\st,-3.2) {\weight{0}};
        \node [place] (e3u4) at (-0.6+\st,-3.2) {\weight{0}};

        \node [transition] (e3v1) at (0.6+\st,-0.8) {}
          edge [pre]  (e30)
          edge [post] (e3v2)
          edge [post] (e3u4);
        \node [transition] (e3u1) at (-0.6+\st,-0.8) {}
          edge [pre]  (e30)
          edge [post] (e3u2)
          edge [post] (e3v4);
        \node [transition] (e3v3) at (0.6+\st,-2.4) {}
          edge [pre]  (e3v2)
          edge [post] (e3v4);
        \node [transition] (e3u3) at (-0.6+\st,-2.4) {}
          edge [pre]  (e3u2)
          edge [post] (e3u4);

\def\st{6}
    \node [place,label=above:{$\set{v_2, v_4}$}] (e40) at (0+\st,0) {\weight{0}};
        \node [place] (e4v2) at (0.6+\st,-1.6) {\weight{2}};
        \node [place,marked] (e4u2) at (-0.6+\st,-1.6) {\weight{2}};
        \node [place] (e4v4) at (0.6+\st,-3.2) {\weight{0}};
        \node [place] (e4u4) at (-0.6+\st,-3.2) {\weight{0}};

        \node [transition] (e4v1) at (0.6+\st,-0.8) {}
          edge [pre]  (e40)
          edge [post] (e4v2)
          edge [post] (e4u4);
        \node [transition] (e4u1) at (-0.6+\st,-0.8) {}
          edge [pre]  (e40)
          edge [post] (e4u2)
          edge [post] (e4v4);
        \node [transition] (e4v3) at (0.6+\st,-2.4) {}
          edge [pre]  (e4v2)
          edge [post] (e4v4);
        \node [transition] (e4u3) at (-0.6+\st,-2.4) {}
          edge [pre]  (e4u2)
          edge [post] (e4u4);
        
\def\st{8}
    \node [place,label=above:{$\set{v_2, v_5}$}] (e50) at (0+\st,0) {\weight{0}};
        \node [place] (e5v2) at (0.6+\st,-1.6) {\weight{2}};
        \node [place,marked] (e5u2) at (-0.6+\st,-1.6) {\weight{2}};
        \node [place] (e5v4) at (0.6+\st,-3.2) {\weight{0}};
        \node [place] (e5u4) at (-0.6+\st,-3.2) {\weight{0}};

        \node [transition] (e5v1) at (0.6+\st,-0.8) {}
          edge [pre]  (e50)
          edge [post] (e5v2)
          edge [post] (e5u4);
        \node [transition] (e5u1) at (-0.6+\st,-0.8) {}
          edge [pre]  (e50)
          edge [post] (e5u2)
          edge [post] (e5v4);
        \node [transition] (e5v3) at (0.6+\st,-2.4) {}
          edge [pre]  (e5v2)
          edge [post] (e5v4);
        \node [transition] (e5u3) at (-0.6+\st,-2.4) {}
          edge [pre]  (e5u2)
          edge [post] (e5u4);

       \node [place,marked,label=below:$v_1$] (v12) at (1.4,-5.8) {\weight{1}}; 
        \node [transition] (v11) at (1.4,-5) {}
          edge [post] (v12)
          edge [pre] (e1u4) edge [pre] (e2u4);
          
       \node [place,label=below:$v_2$] (v22) at (2.7,-5.8) {\weight{1}}; 
        \node [transition] (v21) at (2.7,-5) {}
          edge [post] (v22)
          edge [pre] (e1v4) edge [pre] (e3u4) edge [pre] (e4u4) edge [pre] (e5u4);

	 \node [place,label=below:$v_3$] (v32) at (4,-5.8) {\weight{1}}; 
        \node [transition] (v31) at (4,-5) {}
          edge [post] (v32)
          edge [pre] (e2v4) edge [pre] (e3v4);

      \node [place,marked,label=below:$v_4$] (v42) at (5.3,-5.8) {\weight{1}}; 
        \node [transition] (v41) at (5.3,-5) {}
          edge [post] (v42)
          edge [pre] (e4v4);
          
      \node [place,marked,label=below:$v_5$] (v52) at (6.6,-5.8) {\weight{1}}; 
      \node [transition] (v51) at (6.6,-5) {}
          edge [post] (v52)
          edge [pre] (e5v4);

    \end{tikzpicture}\label{fig:red-net}
}

\caption{A graph and its corresponding workflow net. A maximum independent set
and a marking achieving the concurrency threshold are highlighted in green.}\label{fig:redMIS}
\end{figure}

It is easy to see that $N_G$ is free-choice and sound. For the latter, observe that after firing either $[e,v]^1$ and $[e,v]^3$, or $[e,u]^1$ and $[e,u]^3$, for every edge $e \in E$, the same marking is reached, namely the one putting a token on all the places of the form $[x,y]^4$. From this marking 
we can then fire all the $v_i^1$ transitions to reach the final marking.

We claim that $G$ has an independent set of size at least $k$ if{}f the concurrency threshold of $(N_G,M_I)$ is at least $2|E|+k$.

For the first part, assume $G$ has an independent set $\In \subseteq V$. For every $e \in E$ let $v_e$ be a vertex of $e$ that does \emph{not} belong to $\In{}$.  Fire the transition $[e, v_e]^1$. Let $M$ be the marking so reached. Since $\In{}$ is an independent set, we have $M([e, v]^4) = 1$ for every $v \in \In{}$ and for every edge $e$ containing $v$. So $M$ enables $v^1$ for every $v \in \In{}$. Fire transition $v^1$ of $N_v$ for every $v \in \In{}$, reaching the marking $M'$. We then have $M'([e, v_e]^2)=1$ for every $e \in  E$, and $M'(v^2)=1$ for every $v \in \In{}$. So the concurrency threshold of $M'$ is at least $2|E| + |\In{}| \geq  2|E| + k$.

For the second part, assume that the concurrency threshold of $(N_G,M_I)$ is at least $2|E|+k$.
This can only be achieved by a marking $M$ marking exactly one of $[e, u]^2$ or $[e,  v]^2$ for every
$e = \set{u, v} \in E$, and additionally marking at least $k$ of the places $v^2$.
We claim that $\In = \set{ v \in V \mid M(v^2) = 1 }$ is an independent set.
Indeed, in a sequence reaching $M$, for every $e=\{v, u\} \in E$ only one of
$[e,v]^4$ or $[e,u]^4$ can become marked, but not both;
so no occurrence sequence leading to $M$ contains both $v^1$ and $u^1$,
and so $M$ does not mark both $v^2$ and $u^2$.
So $G$ has an independent set of size at least $k$.
\end{proof}

\end{document}